\documentclass[12pt,letter]{article}
\usepackage{amssymb,amsmath,amsthm,mathrsfs}
\usepackage[mathscr]{euscript}
\usepackage{graphicx}
\usepackage{setspace}
\onehalfspacing
\usepackage{fullpage}
\usepackage{hyperref}
\hypersetup{
    colorlinks=true,
    linkcolor=black,
    filecolor=magenta,      
    urlcolor=cyan,
    pdftitle={Overleaf Example},
    pdfpagemode=FullScreen,
    }

\usepackage[makeroom]{cancel}
\usepackage{afterpage}
\usepackage{wasysym}
\usepackage[greek,english]{babel}
\usepackage{teubner}
\usepackage{titling}
\usepackage{blindtext}

\pagestyle{headings}
\usepackage{setspace}
\usepackage{amsthm}
\parindent=25pt \parskip=0pt
\usepackage[margin=3cm]{geometry}
\addtolength{\oddsidemargin}{0mm}
\addtolength{\evensidemargin}{0mm}
\addtolength{\textwidth}{0mm}
\addtolength{\topmargin}{-5mm}
\addtolength{\textheight}{-0mm}
\usepackage{amssymb,amsmath}
\usepackage{mathrsfs}
\usepackage{pstricks}

\setcounter{page}{1}
\usepackage{tikz-cd}
\pagenumbering{gobble}
\usepackage[toc,page]{appendix}

\usepackage{blindtext}

\newcommand{\fourxn}{(x_n,y_n, w_n,z_n)}
\newcommand{\fourx}{(x,y,w,z)}
\newcommand{\fourX}{X \times X \times X \times X}
\newcommand{\ie}{\textit{i.e.}}
\newcommand{\wpp}{\succcurlyeq}
\newcommand{\R}{\mathbb{R}}

\newtheorem{theorem}{Theorem}
\newtheorem{axiom}{Axiom}
\newtheorem{proposition}{Proposition}
\newtheorem{lemma}{Lemma}

\newtheorem{mydef}{Definition}

\newenvironment{customthm}[1]
  {\innercustomthm}
  {\endinnercustomthm}

\usepackage[utf8x]{inputenc}
\usepackage{afterpage}
\makeatletter
\newcommand{\varnotsuccsim}{\mathrel{\mathpalette\varn@t\succsim}}
\newcommand{\varn@t}[2]{%
  \vphantom{/{#2}}%
  \ooalign{\hfil$\m@th#1/\mkern2mu$\cr\hfil$\m@th#1#2$\hfil\cr}%
}

\newcommand{\equivclass}[1]{%
  #1/{\sim}%
}

\pagestyle{plain}

\usepackage{titlesec}

\titleformat*{\section}{\LARGE\bfseries}
\titleformat*{\subsection}{\Large\bfseries}
\titleformat*{\subsubsection}{\large\bfseries}
\titleformat*{\paragraph}{\large\bfseries}
\titleformat*{\subparagraph}{\large\bfseries}

\g@addto@macro\normalsize{%
  \setlength\abovedisplayskip{2.5pt}
  \setlength\belowdisplayskip{2.5pt}
  \setlength\abovedisplayshortskip{2.5pt}
  \setlength\belowdisplayshortskip{2.5pt}
}

\title{On the notion of measurable utility on a\\connected and separable topological space:\\an order isomorphism theorem.\thanks{I am grateful to Professor Massimo Marinacci for letting me know about the open problem.}​}
\author{Gianmarco Caldini}
\date{7 February 2020}

\begin{document}

\maketitle

\begin{abstract}
The aim of this article is to define a notion of cardinal utility function called measurable utility and to define it on a connected and separable subset of a weakly ordered topological space. The definition is equivalent to the ones given by Frisch in 1926 and by Shapley in 1975 and postulates axioms on a set of alternatives that allow both to ordinally rank alternatives and to compare their utility differences. After a brief review of the philosophy of utilitarianism and the history of utility theory, the paper introduces the mathematical framework to represent intensity comparisons of utility and proves a list of topological lemmas that will be used in the main result. Finally, the article states and proves a representation theorem, see Theorem \ref{thm1}, for a measurable utility function defined on a connected and separable subset of a weakly ordered topological space equipped with another weak order on its cartesian product. Under some assumptions on the order relations, Theorem \ref{thm1} proves existence and uniqueness, up to positive affine transformations, of an order isomorphism with the real line.​
\end{abstract}

\newcommand{\thicktilde}[1]{\mathbf{\tilde{\text{$#1$}}}}
\newcommand{\Aone}{\mathbf{A1}}
\newcommand{\Atwo}{\mathbf{A2}}
\newcommand{\Athree}{\mathbf{A3}}
\newcommand{\Aoneprime}{\mathbf{A1'}}
\newcommand{\Atwow}{\mathbf{A2w}}

\tikzcdset{every label/.append style = {font = \Large}}

\pagenumbering{arabic}

\addtocontents{toc}{\protect\enlargethispage{\baselineskip}}


\newpage

\section*{Introduction}
\addcontentsline{toc}{section}{Introduction}

\setcounter{page}{1}
Together with notions such as value, money, market and economic agents, utility has been one of the most controversial concepts in the whole history of economic theory. The most important debate can be considered the one around the question whether it is possible to define a clear and rigorous concept of utility and an appropriate notion of unit of measurement for utility, seen as a quantity like the physical ones. In the first chapter we will give a short introduction to the evolution of the concept of utility from both a philosophical and a historical point of view. Our treatment is far from being exhaustive. For an extensive treatment of history of utility and utility measurements, we refer the interested reader to Stigler [\ref{stigler50}], Majumdar [\ref{majumdar58}], Adams [\ref{adams60}], Luce and Suppes [\ref{lucesuppes65}], Fishburn [\ref{fishburn68}], [\ref{fishburn70}], [\ref{fishburn76}] and Moscati [\ref{moscati19}].

The second chapter will shift from the descriptive part to more formal concepts and will be used to introduce the usual mathematical framework of decision theory. Moreover, we will introduce definitions and axioms that will enable us to represent comparisons of the intensity that a decision maker feels about the desirability of different alternatives. For this aim, we will follow the construction of Suppes and Winet [\ref{suppeswinet55}] and Shapley [\ref{shapley75}].

The third and last chapter will be entirely devoted to the proof of Shapley's theorem, extending the domain of alternatives $X$ from a convex subset of $\R$ to a connected and separable subset of a topological space, hence providing a generalization of his theorem. Our intended goal is to define a rigorous notion of a specific kind of cardinal utility function, not only able to rank alternatives, but also to compare utility differences. In particular, we define a ``twofold" utility function in line with the primordial axiomatization of Frisch [\ref{frisch26}], calling it a $\textit{measurable}$ utility function. In mathematical terms, we will prove a specific order-isomorphism theorem between a totally ordered, connected and separable subset of a topological space and the real line.

\section{Philosophy and history of utility theory}

Theory of felicity, theory of justice, theory of morality, theory of virtue and theory of utility are among the most important theories of moral philosophy and, as such, they are constantly sources of questions that often do not find an immediate answer. When a human being acts, or when she makes a decision, she is, at the same time, looking for justifications, either positive or normative, for the decision she has just made. We, as human beings, are constantly trying to prove that what we did was the $\textit{best}$ thing to do, in some well-defined sense, or, at least, the less harmful. These justifications take into account the means, the ends and all the possible paths we have to reach our goals. Moral philosophy is the science that comes into place when we formulate questions about the ends, the means and the possible ways to achieve them. 

Moral philosophy is essentially composed by principles, also called $\textit{norms}$, on what is good and what is bad. They allow to define and to judge human actions, means and ends. Sometimes, norms take the form of $\textit{universal laws}$ to which all human beings are subjected. Nevertheless, the formulation of moral laws or rules that prescribe what a single agent should do or not do are intrinsically tied with history. Historical experiences determine our vision of the world. Our moral philosophy is the result of different heritages that formed a common culture in which values like human respect, an idea of equality between human beings and impartiality are among the most important.

Together with this general definition of morality, there exist the similar concepts of $\emph{ethics}$ and of $\emph{role morality}$ - a specific form of professional morality. It was Jeremy Bentham, in an unfinished manuscript which was posthumously published in 1834, to define the neologism $\textit{deontology}$ in the title of his book $\textit{Deontology or the Science of Morality}$. The manuscript stated, for the first and only time, the particular aspects of Bentham's utilitarian theory as moral philosophy. This passage is clearly mentioned in Sørensen [\ref{philosophy}]:

\begin{quote} \footnotesize
[...] pointing out to each man on each occasion what course of conduct promises to be in the highest degree conducive to his happiness: to his own happiness, first and last; to the happiness of others, no farther than in so far as his happiness is promoted by promoting theirs, than his interest coincides with theirs (p. 5). 
\end{quote}

In this passage we can see how Bentham considered deontology to be primarily aimed at one's own private felicity. Nevertheless, this does not bring any selfish concern. Bentham's goal can be identified with the objective study and measurement of passions and feelings, pleasures and pains, will and action. Among these particular pleasures are those stemming from sympathy - in Adam Smith's sense - and they include the genuine pleasure being happy for the good of others. 

In this light, Bentham spent his life in search of the cardinal principle of ethics and he found it in Epicurean ethics of hedonism. Hedonism comes from Greek $\textit{\<{η}δον\'{η}}$, which means $\textit{pleasure}$. Thus, classic utilitarianism, founded on hedonism, started from the principle that pleasure is an intrinsic positive value and sorrow is an intrinsic negative value. It is, for this reason, somehow curious that Bentham conception, founded on pleasure, had been called $\textit{utilitarianism}$, from the simple observation that what is $\textit{useful}$ is not necessarily pleasant or providing pleasure. We need always to take into account that the term $\textit{utility}$ is intended in a functional sense; what gives $\textit{utility}$ is what contributes the most to the individual, or universal, pleasure.

Classical utilitarian philosophers considered utilitarianism $\textit{well-founded}$ and $\textit{realistic}$ thanks to the fact that it is based on pleasure. It is well-founded as its norms are justified by an intrinsic, absolute value, that does not need any further justification. It is realistic because they thought human being to ultimately seek the maximum pleasure and the minimum sorrow. More specifically, human beings try to choose the action that will provide the maximum excess of pleasure against grief. 

For Bentham, what really matters is the total amount of pleasure, intended as the total excess of pleasure against sorrow: the only reasons for human actions are the quests for pleasure, avoiding sorrow: they are the sources of our ideas, our judgments and our determinations. Human moral judgments become statements on happiness; pleasure (or felicity) is good and sorrow is bad. Utilitarian moral can be considered as a ``calculated hedonism", that carefully evaluates the characteristics of pleasure. Wise is the man that is able to restrain from an immediate pleasure for a future good that, in comparison, will be more beneficial. On the other side, being able to evaluate the positive or negative consequences of an action without making mistakes is fundamental. Hence, the correct utilitarian person should reach some kind of ``moral arithmetic" that allows the correct calculations to be carried out. Far from being a unanimously accepted doctrine, we cannot forget to mention that Alessandro Manzoni wrote an essay [\ref{manzoni19}] in which he strongly criticized Bentham's utilitarianism, saying that it is utterly wrong to think that human beings build their moral values judgment of their actions on utility.\footnote{Manzoni [\ref{manzoni19}] wrote: \emph{"Non ci vuol molto a scoprir qui un falso ragionamento fondato sull’alterazione d’un fatto. Altro è che l’utilità sia un motivo, cioè uno de’ motivi per cui gli uomini si determinano nella scelta dell’azioni, altro è che sia, per tutti gli uomini, il motivo per eccellenza, l’unico motivo delle loro determinazioni (p.775).}}

From this explanation of utilitarianism, Bentham's evaluation criterion of actions follows as an immediate corollary: the maximum happiness for the maximum number of people. Again, happiness is intended as $\textit{state of pleasure}$, or $\textit{absence of grief}$. Hence, individual pleasure becomes no more the ultimate goal: it is the universal pleasure to be hegemonic.\footnote{This tension between individual pleasure and universal pleasure is one of the principal difficulties of utilitarian moral philosophy.} 

This view of utilitarianism admits, at least in our minds, the conception of the existence of a $\textit{scale}$ of pleasure in which pleasure and sorrow can be added and subtracted. In other words, the idea of a $\textit{calculus}$ of felicity and grief is not completely absurd, both in intrapersonal and interpersonal compensations.

\subsection{Brief history of utilitarianism}

Although it is possible to find utilitarian reasonings in Aristotele's works, it is commonly agreed that the beginning of the history of utility can be identified with 18th century moral philosophy. To be even more specific, Bentham's ideas were not isolated, since they were already present in works by his illuministic predecessors like Richard Cumberland, Francis Hutcheson and Cesare Beccaria. Especially Hutcheson [\ref{hutcheson28}] had already defined $\textit{good}$ as pleasure and $\textit{good objects}$ as objects that create pleasure. The novelty of Bentham was to treat pleasure as a measurable quantity, thus making the utilitarian doctrine directly applicable to issues like tax policies and legislation. Indeed, not only did Bentham argue that individual pleasure was measurable, but also that happiness of different people could be compared. Stark [\ref{stark52}] cited in his article Bentham's writings in the following way:

\begin{quote} \footnotesize
Fortunes unequal: by a particle of wealth, if added to him who has least, more happiness will be produced, than if added to the wealth of him who has most (vol. 1, p. 103).
\end{quote}

\noindent
Stark [\ref{stark52}] continues:

\begin{quote} \footnotesize
The quantity of happiness produced by a particle of wealth (each particle being the same magnitude) will be less and less every particle (vol. 1, p. 113).
\end{quote}

It is easy to see how this last concept and the well-known idea of decreasing marginal utility are related.

In the pioneering work of Jevons [\ref{jevons71}], utility functions were the primitive mathematical notion to formalize and quantify Bentham's calculus of pleasure. Utility functions were tools to measure and scale the amount of well-being of human beings. It seems clear, at this point, how the starting role of utility functions was $\textit{cardinal},$\footnote{Note that before the work of Hicks and Allen [\ref{hicksallen34}], economists spoke about $\textit{measurable}$ utility and not of $\textit{cardinal}$ utility.} in the sense that utility or, better, pleasure differences were $\textit{well-founded}$ and $\textit{realistic}$ notions with a strong moral philosophy justification. 

Summing up, in the beginning, utility functions were designed for the mere purpose of a calculus of pleasure and sorrow. However, even if the philosophical concept made sense, the difficulties in the quantification of any experimental measurement of pleasure led cardinal utility theory to be seen more just like a thought process rather than a science. 

However, utility theory did not rise from philosophy alone, but it was object of study of other sciences such as statistics, with the so-called St. Petersburg paradox, and psychophysics, the study of physical stimuli and their relation to sensory reactions. These two phenomena can be considered the starting point of the law of decreasing marginal utility. It was Nicolas Bernoulli that, originally, invented what is now called the St. Petersburg puzzle, which offered the theoretical explanation for the law of decreasing marginal utility of wealth. The standard version of the puzzle is the following: a fair coin is tossed until it lands ``head" on the ground. At that point, the player wins $2^{n}$ dollars, where $n$ is the number of times the coin was flipped. How much should one be willing to pay for playing this game? In other words, what is the expected value of the game, given the probability of ``head" being 0.5? The mathematical answer is 

$$\displaystyle \sum_{i=1}^{\infty} \frac{1}{2^{i}}\cdot 2^{i}= 1 + 1 + \dots = \infty.$$ \\
The only rationale for this conundrum is that, if it makes sense to maximize expected utility and if people are willing to participate to the St. Petersburg game for only a finite amount of money, then their marginal utility as a function of wealth must be, somewhere, decreasing.

Neither Bentham nor Bernoulli thought as decreasing marginal utility as a phenomenon in need of scientific justifications. Nevertheless, this came as an immediate consequence from the psychophysical theories discovered by Weber [\ref{weber46}] and generalized by Fechner [\ref{fechner60}]. One of the most important questions posed by psychophysics is what is the functional link between different degrees of a given stimulus and a given sensation. What Weber did was an in-depth study to try to measure the $\textit{smallest detectable change}$ (also called ``just noticeable difference" or ``minimum perceptible threshold") in stimuli like heat, weight and pitch. Moreover, Fechner took this ``just noticeable difference" as a unit of measurement, constructing a scale for subjective sensations. From their studies we now have the so-called Weber's law and Fechner's laws: the former states that the relative increase of a stimulus needed to produce a just noticeable change is constant and, the latter, that the magnitude of sensation is a logarithmic function of the stimulus. 

In conclusion, if wealth is a stimulus, then Benthamian utility must be the corresponding sensation. In this light, St. Petersburg puzzle can be seen as just one materialization of these laws.

At the end of 19th century, the marginalist revolution paved the way for an ordinal approach to the notion of utility. In fact, this was because one of the main economic problems of late 19th century was the need of a theory of demand. One of the leading figures that founded neoclassical theory with scientific and analytic rigor was Vilfredo Pareto. Pareto is considered the father of the so-called $\textit{ordinal approach}$. It was a notion of utility that was purely comparative and it left out from the theory the initial idea for which utility theory was developed: the existence of psychophysical and physiological substrates. Pareto's theory was so successful that was considered a revolution in the notion of utility. The ordinal approach was extremely successful because it solved the classic consumer problem based on indifference curves, and the notion of utility had a central role in its construction. The key aspect was the replacement of marginal utility - a notion that was meaningless in an ordinal approach - with the trick of marginal rate of substitutions along indifference curves. 

Interesting are the writings of Francis Edgeworth [\ref{edgeworth81}] and Pareto [\ref{pareto06}], starting from very different assumptions and arriving at different conclusions. Edgeworth's main contribution can be summarized in the synthesis of Bentham's utilitarianism and Fechner's psychophysics: his ideas were based on the unit of utility seen as a just perceivable increment of pleasure. Moreover, he was interested also in an $\textit{inter}$-personal unit of utility to be able to carry out welfare comparisons among people. Edgeworth was completely aware of the impossibility of testing these implications, but he was a strong supporter of the idea of possible comparisons of happiness among people.

Pareto, on the other hand, denied Edgeworth's intuition of comparisons of utility. Instead, Pareto [\ref{pareto06}] reckoned the theoretical possibility of a cardinal notion of utility, seen as the limit of the purely comparative notion he developed. Nevertheless, he also argued that such a notion of perfect precision is not attainable and that pleasure is only imperfectly measurable.\footnote{However, Pareto [\ref{pareto55}] writes: ``There is no reason for not accepting it [cardinal utility], with the reservation that it must be verified by the results deduced from it (p. 73)."} Summing up, Edgeworth's and Pareto's ways of conceiving measurable utility must be differentiated and utility theory is still today based on the Paretian notion mainly because of its use in the theory of demand and in the general equilibrium theory.

In 1950, ordinalism was the well-established mainstream ideology in utility theory and the cardinal notion of utility was almost completely abandoned. Nevertheless, the purely comparative approach was not convincing everyone, mainly because people's introspection suggested the existence something more. One of the main supporters of cardinalism was Maurice Allais who explicitly wrote in [\ref{allais94}]:

\begin{quote}\footnotesize
The concept of cardinal utility [...] has almost been rejected the literature for half a century. This rejection, based on totally unjustified prejudices, deprived economic analysis of an indispensable tool (p. 1).
\end{quote}

Allais [\ref{allais94}] admits that the theory of general economic equilibrium can be fully described in an ordinal world, but he immediately lists a series of theories that cannot be adequately developed without a rigorous and well-defined concept of cardinal utility and interpersonal comparisons. Some examples are the theory of dynamic evolution of the economy, the theory of fiscal policy, of income transfers, of collective preferences, social welfare analysis and political choices, of risk, of insurance and the theory of cooperative games. Then, Allais goes even further in his defense of cardinal utility, arguing that even the theory of demand could become more intuitive - and with a simpler exposition - if we could appeal to a notion of $\textit{intensity of preferences}$. In any case, as long as the conclusions of price theory do not change significantly using the ordinal and the cardinal approach, we should prefer the purely comparative approach by Occam's razor. But the problems with group decision making, social choice theory and cooperative game theory still cannot be solved. Indeed, while classical economists considered distributional problems as a fundamental part of economic science, the ordinalist approach to utility theory refused completely to deal with questions that involved interpersonal comparisons of welfare. Economists became more interested in positive statements, rather than normative ones and the accent was put on efficiency, rather than equity. This was the case of the optimum allocations in the sense of Pareto. For a complete overview of the main issues of welfare economics, the main problems with an ordinal approach and the main literature, we refer the interested reader to Sen [\ref{sen82}].

Hence, one of the main problems, still in the 21st century, is how is it possible to understand the intuitive tool of introspection to develop a rigorous theory that economists can apply in their models and explain economic phenomena. The solution does not exists yet. In the last years, the issue started getting the attention of few decision theorists mainly because of the powerful developments in the field of neuroscience and the new discipline of neuroeconomics. These fields of cognitive sciences are going into the direction of overcoming the main difficulty of the founders of utilitarianism: the difficulty of carrying out experiments on pleasure and pain and the construction of a rigorous and well-defined scale of pleasure. Nevertheless, nothing is clear yet, mainly because also the theoretical concept of cardinal utility is still vague. Cardinal utility is still used as a name for a large number of formally distinct concepts and it misses a precise and well-established definition that can be applied in decision-theoretic models.

During the 20th century a lot of methodologies to try to define a concept of $\textit{measurement}$ of human sensation have been defined.

\begin{mydef}
A $\emph{scale}$ is a rule for the assignment of numbers to aspects\footnote{For example: hardness, length, volume, density, $\dots$} of objects or events.
\end{mydef}

The result was the development of a full taxonomy of scales, with scales that differ in terms of higher precision of measurement. For an extensive treatment of the theory of measurement we refer the interested reader to Krantz et alii [\ref{measurement71}].

The issue of having a rigorous definition for cardinal utility was not solved by the theory of measurement. It was just translated in a different language: what is the suitable scale for measuring a given aspect? The definition of a unit of measurement for utility was not an easy task to solve. Even in physics, where experiments can be carried out with relatively high precision, the way a unit of measurement is defined is not perfect. One meter was originated as the $1/10$-millionth  of the distance from the equator to the north pole along a meridian through Paris. Then, the International Bureau of Weights and Measures, founded in 1875, defined the meter as the distance of a particular bar made by platinum and iridium kept in Sèvres, near Paris. More recently, in 1983, the Geneva Conference on Weights and Measures defined the meter as the distance light travels, in a vacuum, in 1/299,792,458 seconds with time measured by a cesium-133 atomic clock which emits pulses of radiation at very rapid and regular intervals.

Increases in science allow the unit of measurement to be duplicated with a better and better level of precision. The comparison with the unit of measurement of the quantity $\textit{utility}$ can be carried out with the philosophical question whether it is, for some esoteric reason, intrinsically impossible to measure human beings' pleasure or whether economic science and neuroeconomics are so underdeveloped that we still have very poor precision in measuring human felicity.\footnote{Some authors, like Ellingsen [\ref{ellingsen94}], are certain, instead, that the philosophical question of whether utility is intrinsically measurable or not is a spurious one, mainly because they see the issue of ``measurement" as a concept that is always invented and never discovered. In this light, our question can be rephrased as whether it is possible to define a correct notion of measurement that allows some kind of intrapersonal and interpersonal utility comparisons.}

The same comparison can be done with light (or heat, color and wave lengths, as it is mentioned by von Neumann and Morgenstern in [\ref{vnm47}]). For example, temperature was, in the original concept, an ordinal quantity as long as the concept $\textit{warmer}$ was known. Then, the first transition can be identified with the development of a more precise science of measurement: thermometry. With thermometry, a scale of temperature that was unique up to linear transformations was constructed. The main feature was the association of different temperatures with different classes of systems in thermal equilibrium. Classes like these were called $\textit{fixed points}$ for the scale of temperatures. Then, the second transition can be associated with the development of thermodynamics, where the absolute zero was fixed, defining a reference point for the whole scale. In physics, these phenomena had to be measured and the individual had to be able to replicate results of such measurements every time. The same may apply to decision theory and the notion of utility, someday. At the moment, the issue remains unclear, even if even Pareto was not completely skeptical about the first transition from an ordinal purely comparative approach to that of an equality relation for utility differences. Von Neumann and Morgenstern point out in [\ref{vnm47}] that the previous concept is based on the same idea used by Euclid to describe the position on a line: the ordinal utility concept of preference corresponds to Euclid's notion of $\textit{lying to the right of}$ and the derived concept of equality of utility differences with the geometrical congruence of intervals. 

Hence, the main question becomes whether the derived order relation on utility differences can be observed and reproduced. Nobody can, at the moment, answer this question.

\subsection{Axiomatization of utility theories}

In 1900, at the International Congress of Mathematicians in Paris, David Hilbert announced that he was firmly convinced that the foundation of mathematics was almost complete. Then, he listed 23 problems to be solved and to give full consistency to mathematics. All the rest was considered, by him, just details. Some of the problems consisted in the axiomatizations of some fields of mathematics. Indeed, at the beginning of the 20th century, the idea of being able to solve every mathematical problem led mathematicians to try develop all mathematical theory from a finite set of axioms. The main advantage of the axiomatic method was to give a clean order and to remove ambiguity to the theory as a whole. Axioms are the fundamental truths by which it is possible to start modeling a theory. The careful definition of them is critical in the development of a theory that does not contain contradictions. 

As a result, almost all fields of science started a process of axiomatization, utility theory as well. The ordinal Paretian revolution was the fertile environment where preferences started to be seen as primitive notions. Preference relations began to be formalized as mathematical order relations on a set of alternatives $X$ and became the starting point of the whole theory of choice. As a result, utility functions became the derived object from the preference relations. The mainstream notion of ordinal (Paretian) utility reached its maturity with the representation theorems by Eilenberg [\ref{eilenberg41}] and Debreu [\ref{debreu54}], [\ref{debreu64}]. Subsequent work in decision theory shifted from decision theory under certainty to choice problems under uncertainty, with the pioneering article of Ramsey [\ref{ramsey26}] on the ``logic of partial belief." In short, Ramsey [\ref{ramsey26}] stated the necessity of the development of a purely psychological method of measuring both probability and beliefs, in strong contradiction with Keynes' probability theory. Some years after, the milestone works of von Neumann and Morgenstern [\ref{vnm47}] and Savage [\ref{savage54}] gave full authority to decision theory under uncertainty.

One of the first treatments of preference relations as a primitive notion can be identified with Frisch [\ref{frisch26}], in his 1926 paper. Ragnar Frisch was also the first to formulate an axiomatic notion of utility difference. Hence, two kinds of axioms were postulated by him: the first ones - called ``axioms of the first kind" - regarded the relation able to rank alternatives in a purely comparative way, while the second axioms - named ``axioms of the second kind" - reflected a notion of $\textit{intensity of preference}$ and allowed utility differences to be compared. So, in parallel to the axiomatization of ordinal utility, also cardinal utility axiomatizations started to grow.

Frisch's article did not have the deserved impact in the academic arena, mainly because his article was written in French and published in a Norwegian mathematical journal. Hence, the full mathematical formalization of these two notions of preference axioms resulted almost ten years later from the 1930s debate by Oskar Lange [\ref{lange34}] and Franz Alt [\ref{alt36}]. Lange [\ref{lange34}] defined an order relation $\succ$ on the set of alternatives $X$ with the meaning that, for any two alternatives $x,y \in X$, $x \succ y$ reads ``$x$ is strictly preferred to $y$." Then, a corresponding relation $\mathcal{P}$ on ordering differences is assumed with the meaning that, for any $x,y,z,w \in X$, $xy \mathcal{P} zw$ reads ``a change from $y$ to $x$ is strictly preferred than a change from $w$ to $z$.

\noindent
More formally: 

\begin{equation}\label{langeeq1} 
x \succ y  \iff u(x) > u(y) \text{ for all } x,y \in X \hspace{3.25cm} \end{equation} 
\begin{equation}\label{langeeq2} 
xy \mathcal{P} zw  \iff u(x) - u(y) > u(z) - u(w) \text{ for all } x,y,z,w \in X \end{equation}
\\
\noindent
The main theorem of Lange [\ref{lange34}] can be stated as follows:

\begin{theorem}
If there exists a differentiable utility function $u: \R \rightarrow \R$ such that (\ref{langeeq1}) and (\ref{langeeq2}) hold, then only positive affine transformations of that utility function represent the given preferences $\succ$ and $\mathcal{P}$.
\end{theorem}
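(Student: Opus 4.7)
The plan is to prove uniqueness by considering any second utility $v$ that also represents both $\succ$ and $\mathcal{P}$, and showing $v$ must be of the form $\alpha u + \beta$ with $\alpha > 0$. The converse implication — that positive affine transforms do preserve (\ref{langeeq1}) and (\ref{langeeq2}) — is immediate: a positive factor $\alpha$ pulls out of strict inequalities between differences and the additive constant $\beta$ cancels, so I will not dwell on it.

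First, because $u$ and $v$ both represent $\succ$ via (\ref{langeeq1}), we have $u(x) > u(y) \iff v(x) > v(y)$, so $v$ factors as $v = f \circ u$ for a uniquely determined strictly increasing $f\colon u(\R) \to \R$. The differentiability of $u$ ensures $I := u(\R)$ is an interval. Substituting $a = u(x)$, $b = u(y)$, $c = u(z)$, $d = u(w)$ in (\ref{langeeq2}) and using that $v = f \circ u$ also represents $\mathcal{P}$, I obtain
\[ a - b > c - d \iff f(a) - f(b) > f(c) - f(d) \quad\text{for all } a,b,c,d \in I. \]
Swapping the pairs $(a,b) \leftrightarrow (c,d)$ forces the equality case: $a - b = c - d$ iff $f(a) - f(b) = f(c) - f(d)$. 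In particular, $f(a+h) - f(a)$ depends only on $h$, not on $a$, as long as both $a$ and $a+h$ lie in $I$.

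Fixing an interior point $a_0 \in I$, I would set $\phi(h) := f(a_0+h) - f(a_0)$ on a neighborhood of $0$. The translation-invariance just established gives $f(a_0+h_1+h_2) - f(a_0+h_1) = f(a_0+h_2) - f(a_0) = \phi(h_2)$, so
\[ \phi(h_1+h_2) = \bigl[f(a_0+h_1+h_2) - f(a_0+h_1)\bigr] + \bigl[f(a_0+h_1) - f(a_0)\bigr] = \phi(h_2) + \phi(h_1), \]
which is Cauchy's functional equation near $0$. Since $f$ is strictly increasing, $\phi$ is strictly monotone, and every monotone additive function is linear, giving $\phi(h) = \alpha h$ with $\alpha > 0$. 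Hence $f$ is affine on a neighborhood of $a_0$; repeating the argument on overlapping neighborhoods covering $I$ yields $f(t) = \alpha t + \beta$ globally on $I$, so $v = \alpha u + \beta$ with $\alpha > 0$, as required.

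The delicate step I expect to be the main obstacle is this last passage from Cauchy's equation to linearity: without some regularity, additive functions admit pathological solutions whose graphs are dense in $\R^2$. The rescue is the strict monotonicity of $f$ — inherited from the fact that both $u$ and $v$ represent the same strict order — which transfers to $\phi$ and makes the classical monotone-additive-implies-linear theorem applicable. A more computational alternative would exploit the differentiability hypothesis directly: differentiating the invariance identity $f(u(x)+h) - f(u(x)) = \phi(h)$ in $x$ would yield $f'(u(x))u'(x)$ independent of $x$ and hence $f'$ constant, but this requires smoothness of $v$ that is not given a priori, so the functional-equation route seems more robust.
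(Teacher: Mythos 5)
Your argument is essentially correct, but there is nothing in the paper to compare it against: the paper states this theorem purely as a historical record of Lange's result and never proves it; the surrounding text only remarks that Lange's conditions are merely necessary and that the differentiability assumption ``can be largely relaxed.'' The closest material elsewhere in the paper is Proposition 1, whose proof verifies only the trivial converse direction (that $\alpha u + \beta$ with $\alpha>0$ still represents the two orders), and the uniqueness step inside the proof of Theorem \ref{thm1}, which runs quite differently: there, any second representation is normalized at two reference points $a_0,a_1$ and then shown to coincide with the constructed representation on the dyadic set $\mathcal{A}_{\infty}$, which is dense in $X$. Your route --- factor $v=f\circ u$ with $f$ strictly increasing, extract translation invariance of increments from the equality case of (\ref{langeeq2}), reduce to Cauchy's functional equation, and use monotonicity to rule out pathological additive solutions --- is a genuinely different, self-contained argument. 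It also substantiates the paper's unproved remark about relaxing differentiability: you invoke differentiability only through continuity, to know that $I=u(\mathbb{R})$ is an interval, so your proof in fact establishes the stronger statement for merely continuous $u$, whereas Lange's own computational route (which you sketch as the alternative) needs smoothness of both representations.

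Two loose ends remain, both minor and easily patched. First, your affine-extension step covers only the interior of $I$; if $u$ attains a maximum or a minimum, $I$ has endpoints, and you need one extra line --- translation invariance of increments applied once at each endpoint --- to see that $f$ satisfies the same affine formula there, since an endpoint is not contained in any of your overlapping neighborhoods. Second, the degenerate case in which $u$ is constant (so $I$ is a single point and there is no interior point $a_0$ to anchor the construction) should be dispatched separately; it is immediate, since then any representing $v$ is constant as well and hence equals $u+\beta$.
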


It is immediate to see that Lange [\ref{lange34}] provides only necessary conditions for a utility function representation of preference relation. Moreover, it is relatively easy to see that the assumption of differentiability of $u$ can be largely relaxed. Hence, the issue becomes whether it is possible to find sufficient conditions on the preference relations under which Lange's utility function - a cardinal utility function - exists. This was done by Franz Alt in his 1936 article [\ref{alt36}]. Alt postulated seven axioms that guaranteed sufficient and necessary conditions for the existence of a continuous utility function - unique up to positive affine transformations - based on a preference relation and a utility-difference ordering relation. In his set of axioms, Alt defined a notion that can be understood as the set of alternatives $X$ to be connected. 

With Frisch's pioneering work of 1926 and 1930s debate by Lange and Alt, the modern ingredients of cardinal utility axiomatization such as equations (\ref{langeeq1}) and (\ref{langeeq2}) and connectedness of the domain of alternatives $X$ started to be formalized. In those years, a lot of different axiomatic models were studied, till the article of the famous philosopher of science Patrick Suppes and his doctoral student Muriel Winet [\ref{suppeswinet55}]. In their 1955 paper, Suppes and Winet developed an abstract algebraic structure of axioms for cardinal utility, called a $\textit{difference structure}$, in line with old Frisch's ideas and Lange's formalization: not only are individuals able to ordinally rank different alternatives, but they are also able to compare and rank utility differences of alternatives. Indeed, Suppes and Winet cited the work of Oskar Lange on the notion of utility differences and stood in favor of the intuitive notion of introspection, elevating it to not just a mere intuition, but as a solid base where to build a notion of utility differences. Suppes and Winet continued their article saying that, up to 1950s, no adequate axiomatization for intensity comparison had been given. Hence, as Moscati [\ref{moscati19}] nicely highlights, they were probably unaware of Alt's representation theorem and this was probably due to the fact that Alt [\ref{alt36}] was published in German in a German journal. Suppes and Winet postulated 11 axioms in total, some on the set of alternatives $X$ and others on the two order relations,\footnote{The conditions these axioms impose are analogous to the conditions defined by Alt [\ref{alt36}]: completeness, transitivity, continuity, and some form of additivity for the two order relations, and an Archimedean property on the quaternary relation.} providing sufficient and necessary condition for a cardinal utility representation, unique up to positive affine transformations. Another approach to the field of axiomatization of cardinal utility was taken twenty years later by Lloyd Shapley. While axiomatizations à la Suppes and Winet started developing a set-theoretic abstract structure, Shapley substituted the usual long list of postulates with strong topological conditions both on the domain of alternatives $X$ and on the topology induced by the order relations. Shapley [\ref{shapley75}] constructed a cardinal utility function $u$ satisfying some consistency axioms between the orders and assuming the domain of $u$ to be a convex subset of the real line. We will enter into the details of Shapley [\ref{shapley75}] in the next chapters.

In conclusion, the notion of cardinal utility has always suffered a lack of conceptual precision in its whole history and, for some authors like Ellingsen [\ref{ellingsen94}], it can be even considered the main reason why scientists have disagreed over whether pleasure can be measured or not.\footnote{Ellingsen [\ref{ellingsen94}] writes about a ``fallacy of identity" and ``fallacy of unrelatedness."} What is certain is that the history of cardinal utility, a part from some sporadic articles, has been a persistent failure, mostly in its applications to economic theory. While the main reason can be probably identified with the almost total absence of any rigorous and proven experimental measurement of pleasure, it is fair to observe that part of its failure must be given to the strong reluctant opinion of the mainstream ordinal ``party." In fact, a large class of economists classify as ``meaningless" even the mere introspective idea of a comparison of utility difference, and not just the concept itself, when formalized in a purely comparative environment. This position is shown to be, with a gentle expression, ``epistemological laziness." We should always remember that no real progress in economic science can be derived from purely abstract reasoning, but only from the combined effort of empirical measurements with theoretical analysis, always under the wise guide of the compass of history and philosophy.

\section{Preliminary results}

The aim of this chapter is twofold. On one side, we introduce the mathematical framework that enable us to represent intensity comparisons that a decision maker feels about the desirability of different alternatives. For this aim, we follow the construction of Suppes and Winet [\ref{suppeswinet55}] and Shapley [\ref{shapley75}]. On the other side, we state and prove a list of lemmas that will be used in Theorem \ref{thm1} and that allow us to generalize Shapley's proof to a connected and separable subset of a topological space.

\subsection{Basic definitions}

\begin{mydef}
A $\emph{relation}$ on a set $X$ is a subset $\succsim$ of the cartesian product $X \times X$, where $x \succsim y$ means $(x,y) \in \hspace{0.15cm} \succsim$.
\end{mydef}

In decision theory, $\succsim$ is usually called a $\emph{preference}$ relation, with the interpretation that, for any two elements $x, y \in X$, we write $x \succsim y$ if a decision maker either strictly prefers $x$ to $y$ or is indifferent between the two.

\begin{mydef}
An $\emph{equivalence relation}$ on a set $X$ is a relation $\mathcal{R}$ on $X$ that satisfies \begin{enumerate}
\item[1)] Reflexivity: for all $x \in X$, we have $x \mathcal{R} x$.
\item[2)] Symmetry: for any two elements $x,y \in X$, if $x \mathcal{R} y$, then $y \mathcal{R} x$.
\item[3)] Transitivity: for any three elements $x,y$ and $z \in X$, if $x \mathcal{R} y$ and $y \mathcal{R} z$, then $x \mathcal{R} z$.
\end{enumerate}
\end{mydef}

\begin{mydef}
A relation $\succsim$ on a set $X$ is called a $\emph{total order relation}$ (or a $\emph{simple order}$, or a $\emph{linear order}$) if it has the following properties: \begin{enumerate}
\item[1)] Completeness: for any two elements $x,y \in X$, either $x \succsim y$ or $y \succsim x$ or both.
\item[2)] Antisymmetry: for any two elements $x,y \in X$, if $x \succsim y$ and $y \succsim x$, then $x=y$.
\item[3)] Transitivity: for any three elements $x,y$ and $z \in X$, if $x \succsim y$ and $y \succsim z$, then $x \succsim z$.
\end{enumerate}
\end{mydef}

Note that if $\succsim$ is complete, then it is also reflexive. The relation $\succsim$ induces, in turns, two other relations. Specifically, for any two elements $x,y \in X$ we write:
\begin{enumerate}
\item[(i)] $x \succ y$ if $x \succsim y$ but not $y \succsim x$.
\item[(ii)]$x \sim y$ if $x \succsim y$ and $y \succsim x$.
\end{enumerate}

It is easy to see, indeed, that if $\succsim$ is reflexive and transitive, then $\sim$ is an equivalence relation. Given an equivalence relation $\sim$ on a set $X$ and an element $x \in X$, we define a subset $E$ of $X$, called the $\emph{equivalence class}$ determined by $x$, by the equation $$E:= \{y \in X : y \sim x \}$$ Note that the equivalence class $E$ determined by $x$ contains $x$, since $x\sim x$, hence $E$ is usually denoted as $[x]$. We will denote $\equivclass{X}$ the collection $\{[x] : x \in X\}$ of all equivalence classes, which is a partition of $X$: each $x \in X$ belongs to one, and only one, equivalence class. In decision theory, an equivalence class is often called an $\emph{indifference curve}$.

\begin{mydef}
A relation $\succsim$ on a set $X$ is called a $\emph{weak order}$ if it is complete and transitive.
\end{mydef}

The problem of finding a numerical representation for a preference relation $\succsim$, $\ie$ an order isomorphism between a generic set $X$ and $\R$, has been widely studied by mathematicians and is a familiar and well-understood concept. Such an order isomorphism is called, in decision theory, a $\emph{utility function}$. More formally: 

\begin{mydef}
A real-valued function $u:X \rightarrow \R$ is a $\emph{(Paretian) utility function}$ for $\succsim$ if for all $x,y \in X$ we have $$x \succsim y \iff u(x) \ge u(y)$$
\end{mydef}

Utility functions ``shift" the pairwise comparisons that characterize the order relation $\succsim$ and its properties in the more analytically convenient space of the real numbers. Nevertheless, as a result, the only thing that is preserved is the order, and the real numbers that are images of the utility function cannot be interpreted as a scale where the decision maker can compare different intensities about the single desirability of any two alternatives $x,y \in X$. What is important is the ranking given by the real numbers, according to the usual order of the ordered field $(\R,\ge)$. Indeed, one can easily prove that every strictly increasing transformation of a utility function is again a utility function. For this reason, utility functions are called $\emph{ordinal}$ and their study belong to what is called $\emph{ordinal utility theory}$. The main problem of $\emph{ordinal utility theory}$ is to study sufficient and necessary conditions under which a relation $\succsim$ admits a utility representation. The original reference can be identified with Cantor [\ref{cantor95}], but the result has been adapted by Debreu [\ref{debreu54}]. 

In addition, to be able to solve optimization problems, one of the properties that is desirable to have is $\emph{continuity}$ of the utility function. Debreu [\ref{debreu54}] is the first to state the theorem in the way we are going to. Nevertheless, he proved it making explicit reference to Eilenberg [\ref{eilenberg41}]. We state here a version of this very well-known theorem.

\begin{mydef}
A weak order $\succsim$ on a set $X$ is said to be $\emph{continuous}$ if, for every $y \in X$, the sets $\{x \in X : x \succ y \}$ and $\{x \in X : x \prec y \}$ are open.\footnote{Note that this is the usual order topology on $X$.}
\end{mydef}

\begin{theorem}[\textbf{Eilenberg}] \label{Eilenbergthm}
Let $\succsim$ be a complete and transitive relation on a connected and separable topological space $X$. The following conditions are equivalent: \begin{enumerate}
\item[(i)] $\succsim$ is continuous.
\item[(ii)] $\succsim$ admits a continuous utility function $u: X \rightarrow \R$.
\end{enumerate}
\end{theorem}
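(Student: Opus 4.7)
The implication (ii) $\Rightarrow$ (i) is immediate: if $u$ is a continuous representation of $\succsim$, then $\{x : x \succ y\} = u^{-1}((u(y),\infty))$ and $\{x : x \prec y\} = u^{-1}((-\infty,u(y)))$ are preimages of open rays under a continuous map, hence open. I focus on the harder direction (i) $\Rightarrow$ (ii).

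The first step is to use connectedness to rule out \emph{jumps}: whenever $a \succ b$ there should exist $c \in X$ with $a \succ c \succ b$. If no such $c$ existed, the sets $U := \{x : x \succ b\}$ and $V := \{x : x \prec a\}$ would be open by continuity of $\succsim$, disjoint (any common element would play the role of $c$), nonempty (containing $a$ and $b$ respectively), and together cover $X$ by completeness---contradicting the connectedness of $X$.

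I then combine no-jumps with separability to obtain a countable \emph{order-dense} subset. Fix a countable dense $D \subseteq X$. For any $x \succ y$, no-jumps yields $z$ with $x \succ z \succ y$, so the open set $\{w : x \succ w \succ y\}$ is nonempty, and density of $D$ produces some $d \in D$ with $x \succ d \succ y$. The quotient $D/{\sim}$ is then a countable dense linear order inside $X/{\sim}$. Cantor's back-and-forth theorem yields an order-isomorphism $\varphi$ from $D/{\sim}$ onto a countable set $Q$ dense in an interval of $\R$, and I extend $\varphi$ to all of $X$ by
\[
u(x) := \sup\{\varphi(d) : d \in D,\ d \prec x\},
\]
with $\sup\emptyset := 0$. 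Order-density of $D$ together with strict monotonicity of $\varphi$ makes it routine to check that $u$ represents $\succsim$: for $x \succ y$, applying order-density twice supplies $d, d' \in D$ with $x \succ d \succ d' \succ y$, yielding $u(x) \ge \varphi(d) > \varphi(d') \ge u(y)$.

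The main obstacle is continuity of $u$. The upper preimage is straightforward:
\[
u^{-1}((b,\infty)) = \bigcup\{\{x : x \succ d\} : d \in D,\ \varphi(d) > b\}
\]
is a union of sets open by hypothesis on $\succsim$. The lower preimage $u^{-1}((-\infty,b))$ admits a parallel description, but unfolding it reduces to the identity $u(d) = \varphi(d)$ for every $d \in D$. Establishing this identity is the technical heart of the argument: one must show that $\{\varphi(d') : d' \in D,\ d' \prec d\}$ is cofinal below $\varphi(d)$, which follows from the density of $Q$ combined with the no-jumps property (needed to ensure that order-density of $D$ persists immediately below $d$). This cofinality step is where continuity of $\succsim$, connectedness of $X$, and separability of $X$ are simultaneously put to work.
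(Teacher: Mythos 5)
The paper does not actually prove Theorem \ref{Eilenbergthm}: it is quoted as a known background result, attributed to Eilenberg [\ref{eilenberg41}] and Debreu [\ref{debreu54}], so there is no internal proof to compare your attempt against. Judged on its own terms, your plan follows the classical Cantor--Debreu route and its architecture is sound: the no-jumps step is exactly the argument in the first half of the paper's Lemma \ref{linearcontinuum} (connectedness against the two open rays); no jumps, continuity of $\succsim$, and topological density of $D$ together give order-density of $D$; Cantor's isomorphism theorem, the sup-extension, and the open-ray descriptions of the preimages then finish the job. The direction (ii) $\Rightarrow$ (i) is indeed immediate.

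Two places need tightening before this is a complete proof. First, the identity $u(d)=\varphi(d)$ does not follow from $Q$ being a dense-in-itself order ``combined with the no-jumps property,'' as you phrase it: a countable dense-in-itself order can sit inside $\R$ with a metric gap below one of its points --- the set $(\mathbb{Q}\cap(0,1))\cup\{2\}\cup(\mathbb{Q}\cap(3,4))$ is order-isomorphic to $\mathbb{Q}$, yet $\sup\{q \in Q : q<2\}=1\neq 2$. What you actually need is that the \emph{target} of Cantor's isomorphism be chosen topologically dense in an interval, e.g.\ $\mathbb{Q}\cap(0,1)$, with any endpoints of $D/{\sim}$ sent to $0$ and $1$; with that choice, cofinality of $\{q \in Q : q < \varphi(d)\}$ below every non-minimal $\varphi(d)$ is automatic, and no appeal to no-jumps is needed at that point (no-jumps has already done its work in establishing order-density of $D$). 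Second, both of your displayed preimage formulas fail literally at extreme indifference classes: a $\succsim$-maximal $x$ with $u(x)<b$ lies in $u^{-1}((-\infty,b))$ but in no set $\{y : y\prec d\}$, and a $\succsim$-minimal $x$, where $u(x)=\sup\emptyset=0$, can lie in $u^{-1}((b,\infty))$ for $b<0$ without lying in any $\{y : y \succ d\}$. In each case one must argue separately --- using $u(d)=\varphi(d)$ and density of $Q$ --- that the preimage is then all of $X$, and one must coordinate the convention $\sup\emptyset:=0$ with the choice $\inf Q=0$. These are routine endpoint repairs rather than structural flaws, so the plan is essentially correct, but a full writeup has to do this bookkeeping.
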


One of the biggest theoretical problems of $\emph{ordinal utility theory}$ is that the expression $$u(x)-u(y)$$ is a well-defined real number thanks to the algebraic properties of $\R$, but it is meaningless in term of the interpretation of a $\emph{difference of utility}$ of two alternatives $x,y \in X$. In other words, a Paretian utility function does not have an intrinsic introspective psychological notion of $\emph{intensity}$ of the preferences. An immediate corollary of this remark is that the concept of $\emph{marginal utility}$ (and what is known under the Gossen's law of decreasing marginal utility), based on the notion of different quotient, is meaningless. More formally, the expression 

$$\frac{du(x)}{dx} = \lim_{h \rightarrow 0} \frac{u(x +h) - u(x)}{h}$$
\vspace{0.05cm}

\noindent
has no meaning in this setting.
Nevertheless, the concept of marginal utility has been a milestone in economic theory, proving that this notion deserves an adequate theoretical foundation.

\subsection{An overview on measurable utility theory}

Let $X$ be a set of alternatives. Pairs of alternatives $(x,y) \in X \times X$ are intended to represent the prospect
of replacing alternative $y$ by alternative $x$, that can be read as ``$x$ in lieu of $y$".
Define the binary relation $\succcurlyeq$ on $X \times X$ called $\textit{intensity preference}$ with the following interpretation: for any two pairs $(x,y)$ and $(z,w)$ in $X \times X$, $$(x,y) \succcurlyeq (z,w)$$ is intended to mean that getting $x$ over $y$ gives at least as much added utility as getting $z$ over $w$ or (if $y \succsim x$) at most as much added sadness.
As a result, our decision maker is endowed with a weak order preference relation $\succsim$ on alternatives and an intensity preference relation $\succcurlyeq$ on pairs of alternatives. 

Shapley [\ref{shapley75}] proves his theorem assuming $X$ to be a convex subset of $\mathbb{R}$. As a result, the proof exploits the full algebraic power of the ordered field and the topological properties of the linear continuum. Our aim is to generalize the set of alternatives $X$ to a connected and separable subset of a topological space, ordered with the binary relations $\succsim$ and $\succcurlyeq$ and with the order topology induced by the weak order $\succsim$.

\noindent
We assume the following axioms for $\succsim$ and $\succcurlyeq$, as in Shapley [\ref{shapley75}]. 
\noindent
\begin{axiom} \normalfont{For all} $x,y,z \in X$ \normalfont{we have} $(x,z) \succcurlyeq (y,z) \iff x \succsim y$. \end{axiom}

Axiom 1 (henceforth $\bold{A1}$) is an assumption of $\textit{consistency}$ between the two orderings because it implies that the decision maker prefers to exchange $z$ with $x$ instead of $z$ with $y$ if and only if she prefers $x$ to $y$. Together with $\bold{A1}$ we can formulate a dual version of $\textit{consistency}$, $\bold{A1'}$, that can be derived from the whole set of axioms we are going to assume later.\footnote{We mention $\bold{A1'}$ as a form of axiom only because in this way we can refer to it in the proof of Theorem \ref{thm1}, but we never assume it formally. A proof of it will be formulated forward with Lemma \ref{lemmaaprime}.}

\begin{customthm}{1{$'$}}\label{onestar}
 \normalfont{For all} $x,y,z \in X$ \normalfont{we have} $(z,x) \succcurlyeq (z,y) \iff y \succsim x$.
\end{customthm}

We now introduce the main object of this thesis: a joint real-valued representation for the two orders $\succsim$ and $\succcurlyeq$.

\begin{mydef}
A real-valued function $ u:X \rightarrow \mathbb{R}$ is a \normalfont{measurable utility function} for $(\succsim, \succcurlyeq)$ if for each pair $x,y \in X$ 
\begin{equation}
x  \succsim y \iff u(x) \ge u(y)
\end{equation}
and if, for each quadruple $x,y,z,w \in X$
\begin{equation}
(x,y) \succcurlyeq (z,w) \iff u(x) - u(y) \ge u(z) - u(w).
\end{equation}
\end{mydef}

The $\textit{measurable}$ terminology has nothing to do with measure theory, but it refers to what is known as $\textit{measurement theory}, \textit{i.e.}$ the field of science that established the formal foundation of quantitative measurement and the assignment of numbers to objects in their structural correspondence. Indeed, not only is a measurable utility function able to rank pairs of alternatives according to a preference relation, but it also represents the idea of magnitude and intensity of the preference relation among alternatives. Therefore, the numerical value $u(x)$ that a measurable utility function assigns to the alternative $x$ is assuming the role of a particular $\textit{unit of measurement}$ for pleasure, that we call $\textit{util}$.

Recall that an ordinal utility function $u$ is unique up to strictly monotone transformations $f:\text{Im(u)} \rightarrow \mathbb{R}$. Hence, a measurable utility function is not ordinal. Nevertheless, it is unique up to positive affine transformations. Recall that a positive affine transformation is a special case of a strictly monotone transformation of the following form $f(x)=\alpha x + \beta$, with $\alpha >0$ and $\beta \in \mathbb{R}$. Positive affine transformations are order-preserving thanks to $\alpha >0$. 

\begin{proposition}
A measurable utility function $ u:X \rightarrow \mathbb{R}$ for $(\succsim, \succcurlyeq)$ is unique up to positive affine transformations.
\end{proposition}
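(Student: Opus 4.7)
The plan is to show that any two measurable utility functions $u$ and $v$ for $(\succsim,\succcurlyeq)$ are related by a positive affine transformation. If $X$ consists of a single indifference class, both $u$ and $v$ are constant and the statement is immediate; otherwise, pick $x_{0},x_{1}\in X$ with $x_{0}\prec x_{1}$, so that $u(x_{1})>u(x_{0})$ and $v(x_{1})>v(x_{0})$. I would then set
$$\alpha=\frac{v(x_{1})-v(x_{0})}{u(x_{1})-u(x_{0})}>0,\qquad \beta=v(x_{0})-\alpha u(x_{0}),$$
and replace $u$ by $\tilde u:=\alpha u+\beta$. A positive affine transformation visibly preserves both numerical representations, so $\tilde u$ is again a measurable utility function, with $\tilde u(x_{0})=v(x_{0})$ and $\tilde u(x_{1})=v(x_{1})$. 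It then suffices to show $\tilde u\equiv v$, since this yields $v=\alpha u+\beta$.

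Next I would define $\phi:\tilde u(X)\to v(X)$ by $\phi(\tilde u(x)):=v(x)$. This map is well-defined and strictly increasing, since $\tilde u(x)=\tilde u(y)\iff x\sim y\iff v(x)=v(y)$, and analogously for the strict inequality. Applying the measurable-utility definition to both $\tilde u$ and $v$ on quadruples yields the crucial difference-preserving property
$$a-b\ge c-d\iff\phi(a)-\phi(b)\ge\phi(c)-\phi(d),\qquad a,b,c,d\in\tilde u(X).$$
Specializing to equalities gives in particular the midpoint identity $\phi(m)=\tfrac{1}{2}(\phi(a)+\phi(c))$ whenever $a,m,c\in\tilde u(X)$ with $m=(a+c)/2$, and more generally a Cauchy-type additivity along triples $a,b,a+b\in\tilde u(X)$.

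The remaining step, and the main obstacle, is to conclude that $\phi$ is the identity on $\tilde u(X)$. Under the hypotheses of the surrounding framework ($X$ connected and separable, $\succsim$ continuous), Eilenberg's theorem forces $\tilde u$ to be continuous, so $\tilde u(X)$ is an interval containing both $\tilde u(x_{0})$ and $\tilde u(x_{1})$. A strictly monotone function on an interval that satisfies the midpoint equation is affine: one first obtains $\phi(t)=t$ on the dense set of dyadic combinations of the two fixed points $\tilde u(x_{0}),\tilde u(x_{1})$ by iterating the midpoint identity, and then extends to the whole interval by monotonicity. Having two fixed points, $\phi$ must be the identity, so $v(x)=\phi(\tilde u(x))=\tilde u(x)=\alpha u(x)+\beta$ for every $x\in X$, which proves uniqueness up to positive affine transformations.
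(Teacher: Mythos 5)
First, note a mismatch of targets. The paper's own proof of this proposition establishes only the easy ``invariance'' direction: if $u$ is a measurable utility for $(\succsim,\succcurlyeq)$ and $\overline u=\alpha u+\beta$ with $\alpha>0$, then $\overline u$ is again one. You instead attempt the converse: that \emph{any} two measurable utilities for the same pair of orders are related by a positive affine transformation. That stronger statement is what the paper obtains only inside Theorem \ref{thm1}, as a by-product of the dyadic construction, and it genuinely requires the topological framework; at the point where this proposition is stated (an abstract set $X$) it is false. For instance, take $X=\{a,b,c\}$ with $u(a)=0$, $u(b)=1$, $u(c)=3$ and $v(a)=0$, $v(b)=1$, $v(c)=10$: both induce the same $\succsim$ and $\succcurlyeq$, and $\mathbf{A1}$, $\mathbf{A2}$, $\mathbf{A3}$ all hold, yet $v$ is not a positive affine transformation of $u$. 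Connectedness is exactly what fails there, so any proof of your reading must use it essentially. Your normalization, the map $\phi$, and the difference-preserving property are all fine; the problems come afterwards.

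Concretely, there are three gaps. (a) ``Eilenberg's theorem forces $\tilde u$ to be continuous'' is a non sequitur: Theorem \ref{Eilenbergthm} asserts the \emph{existence} of some continuous representation of $\succsim$, not the continuity of a given one (ordinal representations with jumps always exist; that no \emph{measurable} representation can have a jump is true in this framework, but it is a consequence of the very uniqueness you are proving, so invoking it here is circular). Hence the claim that $\tilde u(X)$ is an interval is unjustified. (b) Iterating the midpoint identity requires knowing that midpoints of points of $\tilde u(X)$ again lie in $\tilde u(X)$; this is precisely Lemma \ref{2}, whose proof rests on $\mathbf{A1}$--$\mathbf{A3}$ together with Lemma \ref{linearcontinuum} and Lemma \ref{sup}, i.e.\ on connectedness. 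It does not follow from the definition of measurable utility alone, and your proposal never secures it --- without it, the ``dense set of dyadic combinations'' need not exist, as the three-point example shows. (c) Even granting midpoint closure, your dyadic fixed points are dense only in the segment between $\tilde u(x_0)$ and $\tilde u(x_1)$, so monotonicity pins $\phi$ down only there; for points of $\tilde u(X)$ above $\tilde u(x_1)$ or below $\tilde u(x_0)$ you need to grow the grid unboundedly in both directions (this is what the sets $\mathcal{A}_n$ accomplish in the proof of Theorem \ref{thm1}), or to actually exploit the additivity relation you state but never use. Filling (a)--(c) would essentially reproduce the paper's uniqueness argument for Theorem \ref{thm1}; as written, the proposal proves neither the paper's (easy) claim nor the stronger one it aims at.
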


\begin{proof}
If $\overline{u}(x)= \alpha u(x) + \beta$ then we have $$x  \succsim y \iff u(x) \ge u(y) \iff \overline{u}(x)= \alpha u(x) + \beta \ge \alpha u(y) + \beta=\overline{u}(y)$$
and
\begin{align*}
(x,y) \succcurlyeq (z,w) & \iff u(x) - u(y) \ge u(z) - u(w)\\
 & \iff \overline{u}(x) - \overline{u}(y) = \alpha[u(x) -  u(y)] \ge \alpha[u(z) -  u(w)] = \overline{u}(z) - \overline{u}(w).
\end{align*}
\end{proof}
\noindent
As a result, $u$ and $\overline{u}$ are two utility representations for $(\succsim, \succcurlyeq)$. The whole class of utility functions that are unique up to positive affine transformations are called $\textit{cardinal}$. Measurable utility functions are, therefore, cardinal and pertain to the so-called $\textit{cardinal utility theory}.$

Other two axioms ($\mathbf{A2}, \mathbf{A3}$) we need to introduce are the following:

\begin{axiom} \normalfont{For all} $x,y,z,w \in X$ \normalfont{we have} $(x,y) \sim (z,w) \iff (x,z) \sim (y,w)$. \end{axiom}
\begin{axiom} \normalfont{For all} $x,y,z,w \in X$ \normalfont{the set} $$\{(x,y,z,w) \in X \times X \times X \times X : (x,y) \succcurlyeq (z,w) \}$$ is closed in the product topology. \end{axiom}

Axiom 2 is a ``crossover" property that characterizes difference comparisons of utility, while Axiom 3 is a technical assumption defining the order relation $\succcurlyeq$ as $\textit{continuous}.$

Shapley [\ref{shapley75}] proves his theorem on a domain of alternative outcomes that is a nonempty, convex subset $\mathcal{D}$ of the real line where the preference order coincides with the total order of $(\mathbb{R}, \ge)$. Moreover, $\succcurlyeq$ is assumed to be a weak order on $\mathcal{D} \times \mathcal{D}$ such that $\mathbf{A1}$, $\mathbf{A2}$ and $\mathbf{A3}$ are satisfied.

\begin{theorem}[\textbf{Shapley}] \label{Shapleythm}
There exist a utility function $u: \mathcal{D} \subseteq \mathbb{R} \rightarrow \mathbb{R}$ such that
\begin{equation}
x  \ge y \iff u(x) \ge u(y)
\end{equation}
and
\begin{equation}
(x,y) \succcurlyeq (z,w) \iff u(x) - u(y) \ge u(z) - u(w)
\end{equation}
for all $x,y,z,w \in \mathcal{D}.$ Moreover, this function is unique up to a positive affine transformation.
\end{theorem}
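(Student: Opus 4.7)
The plan is to carry out the classical Frisch–Alt midpoint construction, adapted to the hypotheses of this theorem. Since $\mathcal{D}$ is a convex subset of $\mathbb{R}$ (hence an interval), assuming it has at least two points I would fix $a<b$ in $\mathcal{D}$ and normalize $u(a)=0$, $u(b)=1$. The key device is a \emph{midpoint operator}: for any $y\le x$ in $\mathcal{D}$, a point $m\in\mathcal{D}$ with $(x,m)\sim(m,y)$ will play the role of the point whose $u$-value must be $\tfrac12(u(x)+u(y))$. The whole construction then proceeds by repeated bisection (to produce values on a dyadic skeleton inside $[a,b]$), by the dual \emph{doubling} operation (to push outside $[a,b]$ while staying in $\mathcal{D}$), and finally by a density/continuity extension to the rest of $\mathcal{D}$.

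The first technical step I would establish is existence and uniqueness of this midpoint. For existence, fixing $y\le x$, I would look at the two sets $\{t\in[y,x]:(x,t)\succcurlyeq(t,y)\}$ and $\{t\in[y,x]:(t,y)\succcurlyeq(x,t)\}$; by $\mathbf{A3}$ both are closed, by $\mathbf{A1}$ (applied at $t=y$ and $t=x$) each is nonempty, and they cover the connected interval $[y,x]\subseteq\mathcal{D}$, so they meet at some $m$, which is the desired midpoint since $\succcurlyeq$ is a weak order. For uniqueness, if $m>m'$ were both midpoints, $\mathbf{A1}$ gives $(m,y)\succ(m',y)$ while its dual $\mathbf{A1'}$ gives $(x,m')\succ(x,m)$; combining with the two defining indifferences produces a contradictory strict chain $(m,y)\sim(x,m)\prec(x,m')\sim(m',y)$.

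Next I would iterate bisection to assign $u$ on a dyadic family $\{p_{k/2^{n}}\}\subseteq\mathcal{D}$ with $p_0=a$, $p_1=b$, $p_{1/2}=M(b,a)$, and in general $p_{(2k+1)/2^{n+1}}=M(p_{(k+1)/2^{n}},p_{k/2^{n}})$, setting $u(p_{k/2^{n}})=k/2^{n}$. Outside $[a,b]$ I would use doubling: whenever $y\le m$ admits $x\in\mathcal{D}$ with $(x,m)\sim(m,y)$, set $u(x)=2u(m)-u(y)$. The crucial use of $\mathbf{A2}$ is to show this whole scheme is \emph{well-defined}: if the same point is obtained along two different bisection/doubling paths, the crossover axiom forces the two candidate values of $u$ to coincide, essentially because $\mathbf{A2}$ turns indifferences between ``intervals'' into indifferences between shifted ``differences,'' which is exactly the additivity the dyadic scheme needs. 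Monotonicity of $u$ on the skeleton is immediate from $\mathbf{A1}$. I would then use $\mathbf{A3}$ and convexity of $\mathcal{D}$ (plus an Archimedean argument — iterated bisection inside any hypothetical gap leads, via compactness/closedness in $\mathbf{A3}$, to a contradiction) to conclude that the skeleton is order-dense in $\mathcal{D}$, and extend $u$ to all of $\mathcal{D}$ as the unique monotone function agreeing with the skeleton values; continuity of this extension follows from Theorem \ref{Eilenbergthm}.

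Finally I would verify the two representation identities. The ordinal identity is immediate. For the intensity identity, reduce to dyadic quadruples first, where $u(x)-u(y)=u(z)-u(w)$ translates into a finite chain of midpoint equalities stitched together by $\mathbf{A2}$; then pass to general quadruples by density in the skeleton and $\mathbf{A3}$ (squeezing arbitrary pairs between dyadic ones and taking limits in closed sets). Strict comparisons follow by the usual perturbation argument. Uniqueness up to positive affine transformations is already given by the preceding proposition. The main obstacle I anticipate is precisely the well-definedness and the compatibility of the dyadic assignments under doubling: making sure that $\mathbf{A2}$ — combined with the uniqueness of midpoints — really suffices to collapse all alternative construction paths to a single value of $u$, and then that this scalar identity propagates from dyadic to arbitrary pairs through $\mathbf{A3}$ without requiring any further axiom.
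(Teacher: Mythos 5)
Your proposal is, in substance, the same construction the paper uses. Note that the paper never proves Theorem \ref{Shapleythm} directly: it is subsumed by Theorem \ref{thm1}, since a convex $\mathcal{D}\subseteq\mathbb{R}$ is connected and separable and $\ge$ is complete, transitive and antisymmetric. The proof of Theorem \ref{thm1} is exactly the Frisch--Alt--Shapley scheme you describe: normalize at two points, build nested lattices $\mathcal{A}_0\subseteq\mathcal{A}_1\subseteq\dots$ with level-$n$ spacing given by a ``unit'' $\mathbf{1}_n$, assign dyadic values, prove density of $\mathcal{A}_\infty$, extend by continuity, and obtain uniqueness by normalization. Your organizational differences are immaterial (the paper builds an integer-spaced lattice via the solvability Lemma \ref{1} and then refines by the midpoints of Lemma \ref{2}, whereas you bisect inside $[a,b]$ first and double outward), and your connectedness-covering proof of midpoint existence --- two nonempty closed sets covering $[y,x]$ must meet --- is a clean substitute for the sup/inf argument in the paper's Lemma \ref{2}.

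Two points, however, are genuine gaps. First, uniqueness: you say it ``is already given by the preceding proposition,'' but Proposition 1 only proves the invariance direction --- every positive affine transformation of a measurable utility is again a measurable utility --- not the converse claim that any two representations are affinely related, which is what the theorem asserts. The converse requires the argument at the end of the paper's proof of Theorem \ref{thm1}: normalize the competitor $\overline{u}$ at $a_0,a_1$ and show, by uniqueness of the step-points and midpoints at every stage and then by density and continuity, that the normalized $\overline{u}$ coincides with $u$ everywhere. Second, the density step: the appeal to ``compactness'' is misplaced, since $\mathcal{D}$ need not be compact and compactness plays no role. What does the work in the paper is (a) showing that no level-$n$ lattice has an accumulation point in the space --- otherwise closedness of $\succcurlyeq$ under monotone limits (Lemma \ref{producttopology}) would give $(a^{\star},a^{\star})\succcurlyeq\mathbf{1}_n\succ\mathbf{0}$, a contradiction --- so that $y_n=\sup\{y\in\mathcal{A}_n : x\succsim y\}$ is itself a lattice point; and (b) the squeeze $\mathbf{1}_n\succ(x,y_n)\succcurlyeq\mathbf{0}$ combined with the monotone limits $y_n\uparrow y^{\star}$ and $a_1^n\downarrow a^{\star}$, to which Lemma \ref{producttopology} applies and yields $(x,y^{\star})\sim\mathbf{0}$, hence $x=y^{\star}\in\overline{\mathcal{A}_\infty}$. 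This is precisely the Archimedean-type fact your sketch labels but does not derive: that iterated bisection forces the unit indifferences to collapse to $\mathbf{0}$ in the limit is a consequence of $\mathbf{A3}$ plus connectedness that must be proved, and it is the crux of the whole theorem.
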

\noindent
The theorem is stated as a sufficient condition, which is the most difficult part to prove. The necessary condition of the theorem is easily proved and we state it here as a proposition.

\begin{proposition}
If the pair $(\ge, \succcurlyeq)$ has a continuous measurable utility function $u: \mathcal{D} \subseteq \mathbb{R} \rightarrow \mathbb{R}$, then $\ge$ is complete and transitive, $\succcurlyeq$ is complete, transitive, continuous $(\mathbf{A3})$ and satisfies the crossover axiom $(\mathbf{A2})$, and jointly $\ge$ and $\succcurlyeq$ satisfy the consistency axiom $(\mathbf{A1})$.
\end{proposition}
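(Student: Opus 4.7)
The plan is to verify each clause of the proposition by translating every statement about $\succcurlyeq$ or $\ge$ into an inequality among real numbers of the form $u(x)$ or $u(x)-u(y)$, and then appealing to the corresponding property of the standard order on $\mathbb{R}$. Since the measurable utility representation pushes all structure onto $\mathbb{R}$, each verification reduces to an elementary algebraic or topological fact.

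First I would dispose of $\ge$: completeness and transitivity on $\mathcal{D}\subseteq\mathbb{R}$ are simply inherited from the total order on $\mathbb{R}$. Then I would turn to $\succcurlyeq$ and check its three order-theoretic properties. For completeness, given $(x,y),(z,w)\in\mathcal{D}\times\mathcal{D}$, the real numbers $u(x)-u(y)$ and $u(z)-u(w)$ are comparable, so at least one of $(x,y)\succcurlyeq(z,w)$ or $(z,w)\succcurlyeq(x,y)$ holds by definition. For transitivity, chain inequalities: if $u(x)-u(y)\ge u(z)-u(w)$ and $u(z)-u(w)\ge u(a)-u(b)$, then $u(x)-u(y)\ge u(a)-u(b)$.

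Next I would verify axiom $\mathbf{A2}$: observe that $(x,y)\sim(z,w)$ is equivalent to $u(x)-u(y)=u(z)-u(w)$, which by simple rearrangement of real numbers is equivalent to $u(x)-u(z)=u(y)-u(w)$, i.e.\ $(x,z)\sim(y,w)$. For continuity (axiom $\mathbf{A3}$), I would note that the map $\Phi\colon \fourX\to\mathbb{R}$ defined by $\Phi\fourx := u(x)-u(y)-u(z)+u(w)$ is continuous, since $u$ is continuous and $\fourX$ carries the product topology; the set in the statement of $\mathbf{A3}$ is precisely $\Phi^{-1}([0,\infty))$, hence closed.

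Finally I would check the consistency axiom $\mathbf{A1}$: using the measurable representation,
\[
(x,z)\succcurlyeq (y,z) \iff u(x)-u(z)\ge u(y)-u(z) \iff u(x)\ge u(y) \iff x\ge y,
\]
where the last equivalence is the ordinal part of the representation. There is no genuine obstacle here; the proposition is the ``easy direction'' of Shapley's theorem, and the only care required is to invoke the continuity of $u$ when unpacking $\mathbf{A3}$ and to keep the quantifiers in the right order.
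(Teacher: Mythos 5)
Your proposal is correct and is exactly the routine verification the paper has in mind: the paper states this proposition without proof, remarking only that the necessary direction is ``easily proved,'' and each of your steps (inheriting completeness and transitivity of $\ge$ from the total order on $\mathbb{R}$, chaining real inequalities for completeness and transitivity of $\succcurlyeq$, rearranging the equality $u(x)-u(y)=u(z)-u(w)$ for $\mathbf{A2}$, writing the $\mathbf{A3}$ set as $\Phi^{-1}([0,\infty))$ for the continuous map $\Phi$, and cancelling $u(z)$ for $\mathbf{A1}$) is sound. The only cosmetic slip is notational: the domain of $\Phi$ should be $\mathcal{D}\times\mathcal{D}\times\mathcal{D}\times\mathcal{D}$ rather than $X \times X \times X \times X$, since the proposition concerns Shapley's setting where the alternatives form the convex set $\mathcal{D}\subseteq\mathbb{R}$.
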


Shapley's construction of the measurable utility function of Theorem \ref{Shapleythm} is extremely elegant, but has the drawback of being too specific as $u$ is defined on a convex subset of $\mathbb{R}$. On the other side of the spectrum, as mentioned in the first chapter, the field of utility axiomatization has been prolific in the 20th century and a copious number of cardinal-utility derivations from preference-intensity axiomatizations were published. One of the most important papers on this issue was the one published in 1955 by Patrick Suppes and Muriel Winet. Recalling what described before, Suppes and Winet [\ref{suppeswinet55}] advanced an axiomatization of cardinal utility based on the assumption that individuals are not only able to rank the utility of different alternatives, as is assumed in the ordinal approach to utility, but are also capable of ranking the differences between the utilities of commodities. Nevertheless, their 11 axioms on an abstract algebraic structure were not fully satisfactory in terms of generality: it was too general. Indeed, some of their axioms can be derived in Shapley [\ref{shapley75}], thanks to the topological properties of $\mathbb{R}$.

The aim of this research is to settle somewhere in between, finding a representation theorem for cardinal utility function (in particular, a measurable one) keeping the elegance of Shapley's proof and generalizing the domain of alternatives into the direction of Suppes and Winet [\ref{suppeswinet55}]. We will state and prove a representation theorem for a measurable utility function $u:X \rightarrow \mathbb{R}$ where $X$ is a connected and separable subset of a topological space, $\succsim$ and $\succcurlyeq$ are weak orders and they satisfy $(\mathbf{A1})$, $(\mathbf{A2})$ and $(\mathbf{A3})$. Before doing this, we need to state and prove some topological preliminary results that will be used in Theorem \ref{thm1}.\footnote{We thank Dr. Hendrik S. Brandsma for providing a feedback and insightful comments.}

\subsection{A few basic lemmas}

\begin{mydef}
Let $X$ be a topological space. $X$ is $\emph{connected}$ if it cannot be separated into the union of two disjoint nonempty open subsets. Otherwise, such a pair of open sets is called a $\emph{separation}$ of $X$.
\end{mydef}

\begin{mydef}
Let $X$ be a topological space. $X$ is $\emph{separable}$ if there exists a countable dense subset. A $\emph{dense}$ subset $D$ of a space $X$ is a subset such that its closure equals the whole space, $\textit{i.e.}$ $\overline{D} = X$.
\end{mydef}

\begin{mydef}
A totally ordered set $(L, \succsim)$ having more than one element is called a $\emph{linear}$ $\emph{continuum}$ if the following hold:
\begin{enumerate}
\item $L$ has the least upper bound property.
\item If $x \succ y$, there exists $z$ such that $x \succ z \succ y$
\end{enumerate}
\end{mydef}

We recall that a $\textit{ray}$ is a set of the following type
$(-\infty, a)=\{x \in L: x \prec a\}$ and
$(-\infty, a]=\{x \in L: x \precsim a\}$ in the case $L$ does not have a minimum. In the case $L$ does have a minimum we write $\left[x_{m}, a\right)=\left\{x \in L: x_{m} \precsim x\prec a\right\}$ and $\left[x_{m}, a\right]=\left\{x \in L: x_{m} \precsim x \precsim a\right\}$. Analogously for the sets $(a,+\infty),[a,+\infty),\left(a, x_{M}\right],\left[a, x_{M}\right]$, where $x_{M}$ is the maximum of $L$ in the case it existed.\footnote{Note that in decision theory, $\textit{rays}$ of a set $X$ equipped with a reflexive and transitive binary relation $\succsim$ are usually denoted with the following notation $L(a, \succsim):=(-\infty, a]=\{x \in X: x \precsim a\}$ and $U(a, \succsim):= [a,+\infty)= \{x \in X: x \succsim a\}$, $L(a, \succ):=(-\infty, a)$ and $U(a, \succ):=(a,+\infty).$}

Given $A \subseteq X$, an element $y \in X$ is an $\emph{upper bound}$ for a set $A$ if $y \succsim x$ for all $x \in A$. It is a $\emph{least upper bound}$ for $A$ if, in addition, it is the minimum of the set of all upper bounds of $A$, that is if $y' \succsim x$ for all $x \in A$ then $y' \succsim y$. If $\succsim$ is antisymmetric, the least upper bound is unique and is denoted $\sup{A}$. The $\emph{greatest lower bound}$ is defined analogously and denoted $\inf{A}$.

\begin{lemma}\label{linearcontinuum}
Let $\succsim$ be a total order on a connected set $X$. Then, $X$ is a linear continuum in the order topology.\footnote{Note that the converse holds as well: $\succsim$ is a total order on a connected set $X$ if and only if $X$ is a linear continuum in the order topology.} 
\end{lemma}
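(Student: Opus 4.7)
The plan is to prove the two defining properties of a linear continuum by contradiction, in each case extracting a separation of $X$ from the failure of the property and then invoking connectedness.

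For the dense-order property, I would suppose that $x \succ y$ with no $z$ satisfying $x \succ z \succ y$. The idea is to let $L = \{a \in X : a \precsim y\}$ and $U = \{a \in X : a \succsim x\}$. Under the assumption, any $a \in X$ with $a \succ y$ must satisfy $a \succsim x$ (otherwise $x \succ a \succ y$), so $L \cup U = X$; also $L \cap U = \emptyset$ because $x \succ y$. Finally, because nothing lies strictly between $y$ and $x$, we actually have $L = \{a : a \prec x\}$ and $U = \{a : a \succ y\}$, which are open rays in the order topology. Thus $\{L,U\}$ is a separation of $X$, contradicting connectedness.

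For the least upper bound property, I would take a nonempty $A \subseteq X$ that is bounded above, let $B$ be the set of upper bounds of $A$, and suppose toward contradiction that $B$ has no minimum. Let $L = X \setminus B = \{x \in X : \exists\, a \in A,\ a \succ x\}$ and $U = B$. Then $L \cup U = X$ and $L \cap U = \emptyset$. $L$ is open because each $x \in L$ sits in the open ray $(-\infty, a)$ for some $a \in A$, which lies in $L$. $U$ is open by the following argument: for each $x \in U$, the assumption that $B$ has no minimum yields $x' \in U$ with $x' \prec x$; and if $y \succ x'$ then $y \succ x' \succsim a$ for every $a \in A$, so $(x',\infty) \subseteq U$, giving an open neighborhood of $x$ inside $U$. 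It remains to check nonemptiness: $U$ is nonempty since $A$ is bounded above, and $L$ is nonempty unless $A$ consists solely of the minimum of $X$ (in which case that minimum is already the supremum, so nothing to prove). Hence $\{L,U\}$ is a separation, again contradicting connectedness.

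The main delicate point, and what I expect to be the chief obstacle, is the openness of $U$ in the second part: one must exploit the hypothetical absence of a minimum of $B$ to produce, at each $x \in U$, a strictly smaller upper bound $x'$ and then argue that the entire open ray $(x',\infty)$ lies in $U$. A second minor obstacle is correctly handling the edge cases where $X$ has a minimum or a maximum so that the rays used above are genuinely open sets in the order topology and the sets $L$ and $U$ are genuinely nonempty; these are easy but need a word or two. Once the separations are in hand, connectedness finishes both arguments.
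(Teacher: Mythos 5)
Your proof is correct and follows essentially the same route as the paper's: the density part uses the identical two-open-ray separation, and the least-upper-bound part rests on the same two facts (the set $B$ of upper bounds is open once it is assumed to have no minimum, and its complement is a union of open rays $(-\infty,a)$ with $a \in A$). The only difference is presentational: you package the contradiction as an explicit separation $\{X \setminus B,\, B\}$, whereas the paper invokes connectedness by noting that a nonempty proper open subset of $X$ cannot be closed and then derives the contradiction at a limit point of $B$ --- the same argument in different clothing.
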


\begin{proof}
Suppose that $a$ and $b$ are two arbitrary but fixed elements of $X$ such that $a \prec b.$ If there is no element $c \in X$ such that $a \prec c \prec b$, then $X$ is the union of the open rays $(-\infty, b)=\{x \in X: x \prec b\}$ and $(a,+\infty)=\{x \in X: a \prec x\}$ both of which are open sets in the order topology and are also nonempty, as the first contains $a$, while the second contains $b$. But this contradicts the fact that $X$ is connected, so there must exists an element $c \in X$ such that $a \prec c \prec b.$

Now, to show the least upper bound property, let $A$ be a nonempty subset of $X$ such that $A$ is bounded above in $X$. Let $B$ be the set of all the upper bounds in $X$ of set $A$, $\textit{i.e.}$ $$B:=\{b \in X: b \succsim a \text { for every } a \in A\}$$ which is nonempty. All we need to show is that $B$ has the least element. If $B$ has a smallest element (or $A$ has a largest element, which would then be the smallest element of $B$), then that element is the least upper bound of $A$. 

Let us assume, instead, that $B$ has no smallest element. Then, for any element $b \in B$, there exists an element $b^{\prime} \in B$ such that $b^{\prime} \prec b$, and so $b \in\left(b^{\prime},+\infty\right) \subseteq B$ with $\left(b^{\prime},+\infty\right)$ being an open set in $X$. This shows that $B$ is a nonempty open subset of $X$. Therefore, $B$ can be closed only in the case when $B=X$. But we know that $B \subset X$, since $A \subseteq X \backslash B$ and $A \neq \emptyset$, so it cannot be the case that $B=X$. Therefore, $B$ has a limit point $b_0$ that does not belong to $B$. Then $b_0$ is not an upper bound of set $A$, which implies the existence of an element $a \in A$ such that $b_0 \prec a$, we can also conclude that $b_0 \in(-\infty, a) \subseteq X \backslash B$, with $(-\infty, a)$ being an open set. This contradicts our choice of $b_0$ as a limit point of set $B.$ Therefore, the set $B$ of all the upper bounds in $X$ of set $A$ must have a smallest element, and that element is the least upper bound of $A$.
\end{proof}

Given $A \subseteq X$, we denote $\overline{A}$ or Cl$A$ the topological $\textit{closure}$ of $A$, that is defined as the intersection of all closed sets containing $A$.



From now on denote $X$ as a subset of a topological space $(\mathbf{X},\tau)$, unless otherwise stated.

\begin{lemma} \label{wlc}
Let $\succsim$ be a complete, transitive and continuous order on a connected set $X$. Given any $x,y \in X$, with $x \succ y$, we have $$x \succsim z \succsim y \Rightarrow z \in X \hspace{0.5cm} \text{for all  } z \in (\mathbf{X},\tau)$$
\end{lemma}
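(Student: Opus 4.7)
I would argue by contradiction. Suppose there exists $z \in \mathbf{X} \setminus X$ with $x \succsim z \succsim y$ for some $x, y \in X$ with $x \succ y$. Since $\succsim$ is transitive and $x \succ y$, the point $z$ cannot be $\sim$-equivalent to both $x$ and $y$ simultaneously (otherwise transitivity would force $x \sim y$, contradicting $x \succ y$); without loss of generality $x \succ z$, the symmetric case $z \succ y$ being handled analogously.

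Next, I would partition $X$ using $z$ as a cutting point, setting
\[
U := \{w \in X : w \succ z\}, \qquad V := \{w \in X : w \precsim z\}.
\]
By completeness of $\succsim$ applied to $X \cup \{z\}$, these two sets are disjoint with $U \cup V = X$, and both are nonempty since $x \in U$ and $y \in V$. The core of the argument is then to show that both $U$ and $V$ are open in the subspace topology on $X$: once this is established, $\{U,V\}$ is a separation of $X$ into two nonempty open sets, contradicting connectedness. Intuitively, openness should follow from the continuity of $\succsim$, since the upper and lower contour sets at $z$ are open and intersecting with $X$ recovers $U$ and $V$ (after a harmless adjustment for elements of $X$ equivalent to $z$).

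The main obstacle is justifying openness at the \emph{external} cutting point $z \notin X$, because continuity of $\succsim$ is formulated only for elements of $X$. I would handle this by combining the linear-continuum property from Lemma \ref{linearcontinuum} with continuity of $\succsim$ at elements of $X$ close to $z$: given an arbitrary $w \in U$, I would produce some $p \in X$ with $z \prec p \prec w$, so that the open ray $\{u \in X : u \succ p\}$ is an open neighborhood of $w$ in $X$ contained in $U$. Exhibiting such an intermediate $p \in X$ is the delicate step and is where the ambient topology $\tau$ on $\mathbf{X}$ interacts nontrivially with the linear-continuum structure on $X$ — this is what prevents $X$ from having a ``gap'' just above $z$. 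The symmetric construction yields openness of $V$, and the resulting separation contradicts the connectedness of $X$, completing the proof.
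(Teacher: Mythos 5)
Your overall strategy --- contradiction, a two-set separation of $X$ obtained by cutting at $z$, and connectedness --- is the same as the paper's. The difference is that the paper simply asserts that, ``by the continuity of $\succsim$,'' the sets $\{w \in X : w \prec z\}$ and $\{w \in X : w \succ z\}$ are open and partition $X$; that is, it reads continuity as applying to contour sets at the \emph{ambient} point $z$, not only at points of $X$. You correctly notice that the paper's definition of continuity only covers contour sets at points of $X$, and you try to repair this by manufacturing, for each $w$ in the upper piece $U$, an intermediate point $p \in X$ with $z \prec p \prec w$, so that the ray $\{u \in X : u \succ p\}$ witnesses openness. But you leave exactly this ``delicate step'' unproven, and it is not merely delicate: it is essentially the content of the lemma itself (asserting that $X$ has no gap just above $z$ is, up to indifference, asserting $z \in X$), and it cannot be derived from the hypotheses as you read them.

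Concretely: let $\mathbf{X} = [0,1] \cup \{z\}$, where the ambient order places $z$ strictly between $[0,\tfrac{1}{2})$ and $\tfrac{1}{2}$, i.e.\ $p \prec z \prec \tfrac{1}{2}$ for all $p \in [0,\tfrac{1}{2})$. Take $X = [0,1]$ with its usual order: it is connected in its order topology, $\succsim$ is continuous at every point of $X$, and $1 \succ z \succ 0$. Yet for $w = \tfrac{1}{2} \in U$ there is no $p \in X$ with $z \prec p \prec w$; indeed $U = [\tfrac{1}{2}, 1]$ is not open in the order topology of $X$, so your separation never materializes --- and in fact the lemma's conclusion itself fails here, since $z \notin X$. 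The statement is only true under the stronger implicit reading the paper uses, namely that comparisons with ambient points also have open contour sets in $X$ (for instance, $\succsim$ is a continuous total order on all of $\mathbf{X}$ and $X$ carries the subspace topology, so that a point of $X$ indifferent to $z$ would equal $z$). Under that reading your intermediate-point construction is unnecessary: $U = \{v \in \mathbf{X} : v \succ z\} \cap X$ and $\{v \in \mathbf{X} : v \prec z\} \cap X$ are open outright, both are nonempty (they contain $x$ and $y$), and they cover $X$, which is exactly the paper's two-line proof. As written, your proposal sits in between: it adopts hypotheses too weak for the claim to be provable (or even true), and the step you flag as delicate is precisely where it breaks.
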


\begin{proof}
Suppose by contradiction that there exists $z \in \mathbf{X} \backslash X$ such that $x \succ z \succ y$. By the continuity of $\succsim$, we can partition $X$ into two nonempty disjoint open sets $\{x \in X: x \prec z\}$ and $\{x \in X: x \succ z\}$, which contradicts the connectedness of $X$.
\end{proof}

\begin{lemma}
Suppose that jointly $\succsim$ and $\wpp$ satisfy $\Aone$. If $\wpp$ is continuous , then $\succsim$ is continuous.
\end{lemma}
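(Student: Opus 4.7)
The plan is to unwind the definitions and reduce everything to a preimage-of-a-closed-set argument. Continuity of $\succsim$ means that for every fixed $y \in X$, the sets $\{x \in X : x \succ y\}$ and $\{x \in X : x \prec y\}$ are open. Since $\succsim$ is a weak order (hence complete), these are exactly the complements of $\{x : y \succsim x\}$ and $\{x : x \succsim y\}$ respectively, so it suffices to show that both of the latter ``weak'' sets are closed in $X$.

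The key move is to use $\mathbf{A1}$ with the specific choice $z = y$, which collapses a statement about $\succsim$ into a statement about $\wpp$. Concretely, $\mathbf{A1}$ yields the two rewrites
\[
\{x \in X : x \succsim y\} \;=\; \{x \in X : (x,y) \wpp (y,y)\},
\qquad
\{x \in X : y \succsim x\} \;=\; \{x \in X : (y,y) \wpp (x,y)\}.
\]
With this, each weak upper/lower contour set of $\succsim$ at $y$ has been expressed as a slice of the graph of $\wpp$ sitting in $X \times X \times X \times X$.

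Now I invoke $\mathbf{A3}$: the set $S = \{(a,b,c,d) \in X^4 : (a,b) \wpp (c,d)\}$ is closed in the product topology. The two slices above are precisely the preimages of $S$ under the maps
\[
\varphi_y \colon X \to X^4, \quad \varphi_y(x) = (x,y,y,y),
\qquad
\psi_y \colon X \to X^4, \quad \psi_y(x) = (y,y,x,y).
\]
Each of $\varphi_y$ and $\psi_y$ is continuous because each coordinate is either the identity on $X$ or the constant map $y$, and a map into a product is continuous iff each component is. Hence both preimages are closed in $X$, so their complements — the strict contour sets $\{x : x \succ y\}$ and $\{x : x \prec y\}$ — are open, which is exactly continuity of $\succsim$.

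There is no real obstacle here; the only subtlety is remembering to use $\mathbf{A1}$ at $z = y$ (so that $(y,y)$ appears as a ``reference point'' in both coordinates of the pair), and to use completeness of $\succsim$ to pass from closedness of the weak contour sets to openness of the strict ones. Everything else is the standard fact that preimages of closed sets under continuous maps are closed.
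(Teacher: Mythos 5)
Your proof is correct and is in essence the same argument as the paper's: use $\Aone$ to rewrite the weak contour sets of $\succsim$ as slices of the graph of $\wpp$, get closedness from $\Athree$, and pass to complements using completeness. One difference is worth noting, and it is in your favor. For the lower contour set $\{x \in X : y \succsim x\}$ the paper appeals to $\Aoneprime$, which is not among the hypotheses of this lemma: in the paper $\Aoneprime$ is only derived later, in Lemma \ref{lemmaaprime}, under the additional assumptions that $\wpp$ is complete, transitive and satisfies $\Atwo$. Your instantiation of $\Aone$ with $z = y$, giving $y \succsim x \iff (y,y) \wpp (x,y)$, shows that $\Aone$ alone suffices for both contour sets, so your proof works under exactly the stated hypotheses and with no forward reference. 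You also spell out a step the paper leaves implicit: $\Athree$ asserts closedness of a subset of $X \times X \times X \times X$, and deducing that the one-variable slices are closed in $X$ requires precisely your observation that they are preimages of that set under the continuous maps $x \mapsto (x,y,y,y)$ and $x \mapsto (y,y,x,y)$.
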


\begin{proof}
For all arbitrary but fixed $y, z \in X$, by $\Aone$ we have $\{x \in X : (x,z) \wpp (y,z)\} = \{x : x \succsim y\}$. By $\Athree$, the set $\{x \in X : (x,z) \wpp (y,z)\}$ is closed. Analogous is the case for $\{x : y \succsim x\}$, derived from $\Aoneprime$.
\end{proof}

\begin{lemma} \label{t1}
Fix $y \in X$, the set $\mathcal{I}_{y}:=\{x \in X : x \sim y\}$ is a closed set in $X$.
\end{lemma}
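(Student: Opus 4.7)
The plan is to write $\mathcal{I}_y$ as the intersection of two closed sets, both of which are known to be closed thanks to the continuity of $\succsim$ established in the preceding lemma.

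First, I would unpack the definition of $\sim$: by construction, $x \sim y$ means $x \succsim y$ and $y \succsim x$, so
\[
\mathcal{I}_y \;=\; \{x \in X : x \succsim y\} \cap \{x \in X : y \succsim x\}.
\]
Second, I would use completeness of $\succsim$ to rewrite each of these two sets as the complement (in $X$) of a strict upper/lower contour set. Specifically, for any $x \in X$ we have $x \succsim y$ if and only if $y \succ x$ fails, whence
\[
\{x \in X : x \succsim y\} \;=\; X \setminus \{x \in X : x \prec y\},
\]
and symmetrically
\[
\{x \in X : y \succsim x\} \;=\; X \setminus \{x \in X : x \succ y\}.
\]
Third, I would invoke the previous lemma, which shows that $\succsim$ is continuous under our standing assumptions $\Aone$ and $\Athree$; by definition of continuity of a weak order, the strict contour sets $\{x : x \prec y\}$ and $\{x : x \succ y\}$ are open in $X$. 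Hence their complements in $X$ are closed, and $\mathcal{I}_y$, being the intersection of two closed sets, is closed.

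No obstacle is anticipated: the argument is a one-line manipulation once continuity of $\succsim$ (from the previous lemma) is in hand. The only subtlety worth flagging is that closedness is meant relative to $X$ with its subspace/order topology, not relative to the ambient space $(\mathbf{X},\tau)$, so the complements above are taken inside $X$. If instead one wanted closedness in $(\mathbf{X},\tau)$, an application of Lemma \ref{wlc} would reduce the question back to closedness in $X$, but as stated the lemma only concerns $X$ itself.
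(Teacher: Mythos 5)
Your proof is correct and follows essentially the same route as the paper's: both write $\mathcal{I}_y = \{x \in X : x \succsim y\} \cap \{x \in X : y \succsim x\}$ and conclude by closedness of the two weak contour sets under the continuity of $\succsim$ supplied by the preceding lemma. The only difference is that you spell out the step the paper leaves implicit, namely that completeness lets one pass from the openness of the strict contour sets (the definition of continuity used in the paper) to the closedness of the weak ones, which is a welcome bit of extra care but not a different argument.
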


\begin{proof}
$\succsim$ is continuous, so for every $y \in X$ we have that $\{x \in X: x \succsim y \}$ and $\{x \in X: y \succsim x \}$ are closed. Pick a point $x$ such that $x \succsim y $ and $y \succsim x$, that is $x \sim y$. So we have $\{x \in X: x \sim y \} = \{x \in X: x \succsim y \} \cap \{x \in X: y \succsim x \}$ and the intersection of two closed sets is closed.
\end{proof}

Note that when $\succsim$ is antisymmetric, the set $\mathcal{I}_{y}$ is a singleton and Lemma \ref{t1} reduces to prove that $X$ satisfies the $T_1$ axiom of separation, that is every one-point set is closed. Clearly, every Hausdorff space satisfies it.

\begin{lemma} \label{sup}
Let $\succsim$ be a continuous total order on a connected set $X$. If $A \subseteq X$ is a nonempty closed set in the order topology and $A$ is bounded above (below), then $\emph{sup}A$ $\emph{(inf}A)$ belongs to $A$.\footnote{The lemma holds even in the case we relaxed connectedness. Nevertheless, we always need to assume $\sup{A}$ exists. If we do not assume the existence of the least upper bound, an easy counterexample is $\mathbb{N} \subset \mathbb{R}$ that is closed in the order topology, but $\sup{\mathbb{N}} \notin \mathbb{N}$.}
\end{lemma}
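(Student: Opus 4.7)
The plan is to argue by contradiction, first ensuring that $\sup A$ actually exists in $X$. By Lemma~\ref{linearcontinuum}, the connected totally ordered set $X$ (with its order topology) is a linear continuum and hence satisfies the least upper bound property. Since $A$ is nonempty and bounded above in $X$, the element $s := \sup A \in X$ is well-defined, and the task reduces to showing $s \in A$.

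Assume toward a contradiction that $s \notin A$. Since $A$ is closed, $X \setminus A$ is open and contains $s$, so there is a basic open set $B$ of the order topology with $s \in B \subseteq X \setminus A$. I would then split into cases according to the shape of $B$, each time aiming to extract some $a \prec s$ that is itself an upper bound of $A$, contradicting the minimality of $s$.

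In the generic case $B = (a,b)$ with $a \prec s \prec b$ (and the same argument covers $B = (a, x_M]$ when $s$ happens to coincide with a maximum of $X$), any $x \in A$ satisfies $x \precsim s$ by the supremum property and $x \notin (a, s]$ because $B \cap A = \emptyset$ and $s \notin A$; together these force $x \precsim a$. Thus $a$ is an upper bound of $A$ with $a \prec s$, contradicting $s = \sup A$.

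The only case requiring extra care — and which I expect to be the main obstacle — is $B = [x_m, b)$, when $X$ has a minimum $x_m$ sitting close to $s$. I would dispose of it as follows: if $s = x_m$, antisymmetry of the total order forces every element of $A$ to equal $x_m$, so $A = \{s\}$, contradicting $s \notin A$; and if $s \succ x_m$, then the whole segment $[x_m, s]$ lies in $B \subseteq X \setminus A$, which combined with $A \subseteq \{x : x \precsim s\}$ forces $A = \emptyset$, again a contradiction. Hence in every case a contradiction is reached and $s \in A$. The statement for $\inf A$ follows by the symmetric argument (or by applying the result already proved to the reversed order, which remains a continuous total order on the same connected space).
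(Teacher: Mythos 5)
Your proof is correct and follows essentially the same route as the paper's: obtain $s=\sup A$ from Lemma \ref{linearcontinuum}, assume $s\notin A$, take a basic order-topology neighborhood of $s$ inside the open set $X\setminus A$, and contradict the leastness of $s$ by producing an upper bound of $A$ strictly below $s$. The only differences are cosmetic: you use the left endpoint $a$ of the basic set directly (where the paper interpolates a point $a^{\star}$ with $a\prec a^{\star}\prec\sup A$), and your case analysis of the basis elements, including $[x_m,b)$, is in fact more exhaustive than the paper's, which treats only $(a,b)$ and the maximum case $(x,\sup A]$.
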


\begin{proof}
Suppose $\text{sup}A \notin A$. Then $\text{sup}A \in X\backslash A$, which is open. By definition, there exists a base element $(a,b)$ such that 
$$ \text{sup}A \in (a,b) \subseteq X\backslash A.$$
$A$ is bounded above so, by Lemma \ref{linearcontinuum}, $\sup{A}$ exists and there is an element $a^{\star}$ such that $a \prec a^{\star} \prec \sup{A}$, then $a^{\star} \in (a,b) \subseteq X\backslash A$, so $a^{\star}$ is an upper bound of $A$ smaller that $\text{sup}A$, reaching a contradiction.
In the case $X$ had a maximum, then consider the case where $\sup{A} = \max{X}$. Let $U := (x,\sup{A}]$ be a basic neighborhood of $\sup{A}$. Then, $x$ cannot be an upper bound of $A$ as $x \prec \sup{A}$. Hence, there exists an element $a \in A$ such that $x \prec a \precsim \sup{A}$. Thus, as $x$ was generic, it follows that $U \cap A \neq \emptyset$. This means that every neighborhood of $\sup{A}$ intersects $A$, that is $\sup{A} \in \overline{A}$. But $A$ is closed, hence $\sup{A} \in A$ and we can conclude $\sup{A} = \max{A}$.

The case of $\inf{A}$ is specular.
\end{proof}

Now we define the notion of convergence in any topological space. 

\begin{mydef}
In an arbitrary topological space $X$, we say that a sequence $x_1, x_2, \dots$ of points of the space $X$ $\emph{converges}$ to the point $x$ of $X$ provided that, corresponding to each neighborhood $U$ of $x$, there is a positive integer $N$ such that $x_n \in U$ for all $n \ge N$. 
\end{mydef}

Moreover, let $\succsim$ a total order. We write $x_{n} \uparrow x$ if $x_1 \precsim x_2 \precsim \dots \precsim x_{n} \precsim \dots$ and $\emph{sup}_{n}x_{n}=x$ where sup is with respect to $\precsim$. The definition $x_{n} \downarrow x$ for a $\precsim$-decreasing sequence is analogous. We say that $(x_{n})$ converges $\emph{monotonically}$ to a limit point $x$ when either $x_{n} \uparrow x$ or $x_{n} \downarrow x$.

We now prove one of the fundamental lemmas that allow us to generalize Shapley's proof to a connected and separable subset of a topological space. Note that, as long as Shapley [\ref{shapley75}] is working on $\mathbb{R}$, sequences as ``enough" to characterize the definition of convergence. This is due to the fact that there exists a countable collection of neighborhoods around every point. This is not true in general, but it is for a specific class of spaces that are said to satisfy the $\textit{first countability axiom}$.\footnote{There are far more general classes of spaces in which convergence can be fully characterized by sequences. We refer the interested reader to the notion of $\textit{Fréchet-Urysohn spaces}$ and $\textit{Sequential spaces}$.} A space $X$ is said to have a $\textit{countable basis at the point x}$ if there is a countable collection $\{U_{n}\}_{n \in \mathbb{N}}$ of neighborhoods of $x$ such that any neighborhood $U$ of $x$ contains at least one of the sets $U_n$. A space $X$ that has a countable basis at each of its points is said to satisfy the $\textit{first countability axiom}$.

In general, however, sequences are not powerful enough to capture the idea of convergence we want to capture in a generic topological space. Indeed, there could be uncountably many neighborhoods around every point, so the countability of the natural number index of sequences cannot ``reach" these points. The ideal solution to this problem is to define a more general object than a sequence, called a \textit{net}, and talk about net-convergence. One can also define a type of object called a \textit{filter} and show that filters also provide us a type of convergence which turns out to be equivalent to net-convergence. With these more powerful tools in place of sequence convergence, one can fully characterize the notion of convergence in any topological space. 

Nevertheless, we are now going to show that every connected, separable and totally ordered set $X$ satisfies the first countability axiom. In fact, we are going to prove even more. We are going to show that $X$ is metrizable, which means there exists a metric $d$ on the set $X$ that induces the topology of $X$.\footnote{A metrizable space always satisfies the first countability axiom.} We give other two definitions that will be used to prove Lemma \ref{metric}.

\begin{mydef}
Suppose $X$ is $T_1$. Then $X$ is said to be $\emph{regular}$ (or $\emph{T}_3$) if for each pair consisting of a point $x$ and a closed set $B$ disjoint from $x$, there exist disjoint open sets containing $x$ and $B$, respectively.
\end{mydef}

\begin{mydef}
If a space $X$ has a countable basis for its topology, then $X$ is said to satisfy the $\emph{second countability axiom}$, or to be $\emph{second-countable}$.
\end{mydef}

\begin{theorem}[\textbf{Urysohn metrization theorem}]
Every regular space $X$ with a countable basis is metrizable.
\end{theorem}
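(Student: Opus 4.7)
The plan is to follow the classical embedding strategy: exhibit $X$ as a topological subspace of a metrizable space, namely the countable product $\mathbb{R}^{\omega}$ (or equivalently the Hilbert cube $[0,1]^{\omega}$) equipped with the product topology, which is metrizable by the standard metric $d(\mathbf{x},\mathbf{y})=\sup_{n}\{\min(|x_n-y_n|,1)/n\}$. Since metrizability is a topological property inherited by subspaces, producing an embedding $F\colon X\hookrightarrow\mathbb{R}^{\omega}$ will finish the argument.

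First I would upgrade regularity to normality. Let $\mathcal{B}$ be a countable basis for $X$. Given disjoint closed sets $A,B\subseteq X$, use regularity to cover $A$ by basis elements $U_n\in\mathcal{B}$ whose closures miss $B$, and symmetrically cover $B$ by $V_n\in\mathcal{B}$ whose closures miss $A$. Both covers are countable. The standard trick of replacing these by $U'_n=U_n\setminus\bigcup_{k\le n}\overline{V_k}$ and $V'_n=V_n\setminus\bigcup_{k\le n}\overline{U_k}$ produces open sets whose unions separate $A$ and $B$, establishing normality ($T_4$).

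Next I would invoke Urysohn's lemma, which guarantees that in a normal space any two disjoint closed sets can be separated by a continuous $[0,1]$-valued function. The key application is the following: for each pair $(B_n,B_m)$ of basis elements from $\mathcal{B}$ with $\overline{B_n}\subseteq B_m$, regularity plus normality yields a continuous function $f_{n,m}\colon X\to[0,1]$ which equals $0$ on $\overline{B_n}$ and equals $1$ on $X\setminus B_m$. There are only countably many such pairs, so we obtain a countable family $\{f_k\}_{k\in\mathbb{N}}$ of continuous functions into $[0,1]$.

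Finally I would define the embedding
\[
F\colon X\longrightarrow[0,1]^{\omega},\qquad F(x)=\bigl(f_1(x),f_2(x),\dots\bigr).
\]
Continuity of $F$ follows componentwise. Injectivity and the fact that $F$ is an open map onto its image both reduce to showing that $\{f_k\}$ separates points from closed sets: given a closed $C\subseteq X$ and $x\notin C$, pick basis elements $B_n,B_m\in\mathcal{B}$ with $x\in B_n$, $\overline{B_n}\subseteq B_m\subseteq X\setminus C$ using regularity, and then the corresponding $f_{n,m}$ has $f_{n,m}(x)=0$ while $f_{n,m}\equiv 1$ on $C$, so $F(x)\notin\overline{F(C)}$ in the product topology. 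This confirms $F$ is a homeomorphism onto $F(X)\subseteq[0,1]^{\omega}$, and pulling back the product metric gives a metric on $X$ inducing the original topology. The main obstacle is the embedding step: one must verify carefully that the countable family $\{f_k\}$ genuinely distinguishes points from closed sets, which is precisely where both the regularity hypothesis and the second countability hypothesis are essential and must be combined delicately.
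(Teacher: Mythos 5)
Your proposal is correct, but note that the paper itself gives no proof of this statement: the Urysohn metrization theorem is quoted there as a classical black-box result (in the spirit of Munkres) and is only \emph{used}, in combination with Lemma \ref{regular} and Lemma \ref{secondcountable}, to conclude metrizability in Lemma \ref{metric}. What you have written is a faithful reconstruction of the standard textbook proof: upgrade regularity plus second countability to normality via the countable-cover shrinking trick, apply Urysohn's lemma to the countably many pairs of basis elements $(B_n,B_m)$ with $\overline{B_n}\subseteq B_m$, and use the resulting countable family of functions to embed $X$ into the Hilbert cube $[0,1]^{\omega}$, whose product topology is metrizable. All three steps are sound, and your identification of the delicate point --- that the family $\{f_k\}$ must separate points from closed sets, which is exactly where regularity and the countable basis interact --- is the right one. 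One small remark: injectivity of $F$ requires applying the separation property to $C=\{y\}$, so you implicitly need one-point sets to be closed ($T_1$); this is harmless here because the paper's definition of regularity explicitly builds in the $T_1$ axiom, but it is worth stating, since for spaces that are regular only in the weaker sense (without $T_1$) the evaluation map need not be injective and the theorem fails.
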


\begin{lemma} \label{metric}
Let $\succsim$ be a continuous total order on a connected and separable topological space $X$ in the order topology and $A \subseteq X$. We have $x \in \overline{A}$ if and only if there exists a sequence $(x_{n}) \in A^{\mathbb{N}}$ that converges monotonically to $x$.

\end{lemma}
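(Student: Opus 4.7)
The plan is to show that $X$ is metrizable via the Urysohn metrization theorem, so sequences characterize the closure, and then to upgrade an arbitrary convergent sequence in $A$ to a monotone subsequence via the classical ``peak point'' argument available in any totally ordered set. The direction ($\Leftarrow$) is immediate: if $(x_n)\subseteq A$ converges monotonically to $x$, it converges to $x$ in the ordinary sense, so every neighborhood of $x$ meets $A$ and hence $x\in\overline{A}$.

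To invoke Urysohn, I would verify regularity and second countability. By Lemma~\ref{linearcontinuum}, $X$ is a linear continuum, in particular order-dense (between any two distinct points lies a third). The order topology on such a set is Hausdorff and regular: given a closed $B$ not containing $x$, pick a basic neighborhood $(a,b)\subseteq X\setminus B$ of $x$, apply order-density to obtain $a\prec a'\prec x\prec b'\prec b$, and separate $x$ from $B$ by the disjoint open sets $(a',b')$ and $(-\infty,a')\cup(b',+\infty)$. For a countable basis, let $D$ be a countable dense subset and take all intervals $(d_1,d_2)$ with $d_1,d_2\in D$ (together with rays at any existing extremum). Given any basic open $(c,d)\ni x$, the intervals $(c,x)$ and $(x,d)$ are nonempty by order-density, hence meet $D$ by topological density, yielding $d_1\in D\cap(c,x)$ and $d_2\in D\cap(x,d)$ with $x\in(d_1,d_2)\subseteq(c,d)$. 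Thus $X$ is metrizable and in particular first countable, so $x\in\overline{A}$ if and only if some sequence in $A$ converges to $x$.

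For the forward direction, take any $(x_n)\subseteq A$ with $x_n\to x$ and apply the peaks argument: call $x_N$ a peak if $x_N\succ x_n$ for every $n>N$. Infinitely many peaks yield a $\precsim$-decreasing subsequence; otherwise, past the last peak one recursively selects a $\precsim$-increasing subsequence (non-peak means some later term $\succsim$-dominates). The resulting monotone subsequence $(y_k)$ is still contained in $A$ and converges to $x$ as a subsequence of a convergent sequence, and a short argument using only the linear continuum structure shows that the $\sup$ (respectively $\inf$) of a monotone convergent sequence must equal its limit, so $y_k\uparrow x$ or $y_k\downarrow x$ as required.

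The main obstacle is the metrizability step, particularly the verification of second countability from the linear continuum and separability hypotheses; it hinges on the order-density of $X$ to guarantee that the subintervals into which $D$ is plugged are themselves nonempty. The regularity verification, the peak-point extraction, and the identification of the monotone limit with $\sup$ or $\inf$ are all classical and straightforward in the order topology.
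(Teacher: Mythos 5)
Your proposal is correct and follows essentially the same route as the paper: regularity and second countability (the latter via connectedness/order-density plus the countable dense set), then the Urysohn metrization theorem to get the sequential characterization of $\overline{A}$, and finally the peak-point extraction of a monotone subsequence, exactly as in Lemmas~\ref{regular}, \ref{secondcountable}, \ref{sequencelemma} and the paper's proof of Lemma~\ref{metric}. The only (welcome) refinement is that you explicitly note the need to identify the topological limit of the monotone subsequence with its $\sup$ or $\inf$, a point the paper leaves implicit in its definition of monotone convergence.
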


The steps of the proof are the following:
\begin{enumerate}
\item [(\textit{i})] We show that $X$ is regular\footnote{In fact, one could prove that $X$ is also normal.} and second-countable. By the Urysohn metrization theorem, which provides sufficient (but not necessary) conditions for a space to be metrizable, there exist a metric $d$ that induces the topology of $X$.
\item [(\textit{ii})] Let $A \subseteq X$ with $X$ metrizable, then we have that $x \in \overline{A}$ if and only if there exists a sequence of points of $A$ converging to $x$.
\item [(\textit{iii})]Finally, we use the fact that in every totally ordered topological space $X$, every sequence admits a monotone subsequence. Then, if a sequence converges, all of its subsequences converge to the same limit. Thus, we can extract our monotone converging sequence.
\end{enumerate}

\begin{lemma} \label{regular}
A totally ordered topological space $X$ is regular in the order topology.
\end{lemma}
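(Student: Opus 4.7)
The plan is to verify the two requirements of regularity in turn: that $X$ is $T_1$, and that every point can be separated from each disjoint closed set by disjoint open neighborhoods. The $T_1$ condition is immediate, since for each $y \in X$ the complement $X \setminus \{y\}$ equals the union of the open rays $\{z \in X : z \prec y\}$ and $\{z \in X : z \succ y\}$, hence is open and $\{y\}$ is closed.

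For the separation property, I would fix $x \in X$ and a closed set $F \subseteq X$ with $x \notin F$. Since $X \setminus F$ is open and contains $x$, it contains a basic open set $U$ of the order topology with $x \in U \subseteq X \setminus F$. I describe the construction when $U$ has the generic form $(a,b)$ with $a \prec x \prec b$, and note that the boundary cases where $x$ is a minimum or maximum of $X$ (so that $U$ is of the form $[x_m, b)$ or $(a, x_M]$) are handled by the same reasoning with trivial adjustments.

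The crux is a case analysis on whether elements exist strictly between $a$ and $x$ and strictly between $x$ and $b$. In the two-sided case, where one can pick $c, d$ with $a \prec c \prec x \prec d \prec b$, I set $V = (c,d)$ and $W = \{z : z \prec c\} \cup \{z : z \succ d\}$: these are open and disjoint, $x \in V$, and since $F$ avoids $(a,b)$ one gets $F \subseteq \{z : z \precsim a\} \cup \{z : z \succsim b\} \subseteq W$. If instead $(a, x) = \emptyset$ but some $d$ with $x \prec d \prec b$ exists, then $(a, d) = \{x\} \cup (x, d)$ is an open neighborhood of $x$, and the pair $V = (a, d)$, $W = \{z : z \prec x\} \cup \{z : z \succ d\}$ does the job. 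The symmetric case is identical, and if both $(a, x)$ and $(x, b)$ are empty then $\{x\} = (a, b)$ is already open, so $V = \{x\}$ and $W = X \setminus \{x\}$ suffice.

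The main obstacle is purely one of careful bookkeeping: the case analysis is driven by possible ``gaps'' in the order, that is, points possessing an immediate successor or predecessor, together with the boundary behavior when $X$ has a minimum or maximum. No deeper topological input is needed beyond the definition of the order topology and the fact that $\succsim$ is a total order.
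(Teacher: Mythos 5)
Your proof is correct and takes essentially the same route as the paper's: both arguments locate a basic order-interval $(a,b)$ around $x$ inside the complement of the closed set, then split into cases according to whether points exist strictly between $a$ and $x$ and between $x$ and $b$, separating $x$ from the closed set with open rays and intervals. The only difference is bookkeeping — the paper treats the two sides independently (building $U_1,V_1$ and $U_2,V_2$ and taking $V_1\cap V_2$ against $U_1\cup U_2$) while you enumerate the four combined cases explicitly — so the content is identical.
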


\begin{proof}
It is basic topology to prove that every totally ordered set is Hausdorff, hence it is $T_1$. Now, suppose $x\in X$ and $B$ is a closed set, disjoint from $x$. So, $x \in X\backslash B$, which is open. Then, by definition of open set, there exists a basis element $(a, b)$ such that $x \in (a, b)$ and $(a,b) \cap B = \emptyset$. Pick any $a_0 \in(a, x),$ and let $U_{1}=\left(-\infty, a_0\right), V_{1}=\left(a_0, \infty\right)$. If no such $a_0$ exists (in our case it would, by connectedness of $X$), then let $U_{1}=(-\infty, x), V_{1}=(a, \infty)$. In both cases, $U_1 \cap V_1 = \emptyset$. Similar is the case of the other side, pick $b_0 \in(x, b),$ and if that exists, denote $U_{2}=\left(b_0, \infty\right), V_{2}=\left(-\infty, b_0\right),$ and if not, let $U_{2}=(x, \infty), V_{2}=(-\infty, b)$. Again, in both cases $U_2 \cap V_2 = \emptyset$. As a result, we obtained that, in both cases, $x \in V_{1} \cap V_{2} $ with $V_{1} \cap V_{2} $ open set and $B \subseteq U_{1} \cup U_{2}$, with $U_{1} \cup U_{2}$ open set. As $V_{1} \cap V_{2}$ is disjoint from $U_{1} \cup U_{2}$, $X$ is regular.
\end{proof}

\begin{lemma}\label{secondcountable}
A totally ordered, connected and separable topological space $X$ is second-countable.
\end{lemma}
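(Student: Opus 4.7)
The plan is to leverage the countable dense subset provided by separability, together with the linear continuum structure (Lemma \ref{linearcontinuum}), to construct an explicit countable basis of intervals (and, at the extremes, rays) for the order topology on $X$.

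First I would let $D$ be a countable dense subset of $X$ and form the collection
\[
\mathcal{B} := \{(a,b) : a, b \in D,\ a \prec b\} \cup \mathcal{B}_{\min} \cup \mathcal{B}_{\max},
\]
where $\mathcal{B}_{\min} := \{[x_m, b) : b \in D,\ x_m \prec b\}$ if $X$ has a minimum $x_m$ and is empty otherwise, and $\mathcal{B}_{\max}$ is defined symmetrically. Since $D$ is countable, $\mathcal{B}$ is countable, so the task reduces to checking that $\mathcal{B}$ generates the order topology.

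Next I would fix an open $U \subseteq X$ and a point $x \in U$, and pick a basic open neighborhood of $x$ contained in $U$. By definition of the order topology this is either an interval $(a,b)$ with $a \prec x \prec b$, a set $[x_m, b)$ containing $x$ (when $X$ has a minimum), or symmetrically $(a, x_M]$. In the interval case, Lemma \ref{linearcontinuum} (property 2) yields points $y_1, y_2 \in X$ with $a \prec y_1 \prec x \prec y_2 \prec b$, so the nonempty open sets $(a, x)$ and $(x, b)$ each meet $D$ by density, producing $a', b' \in D$ with $a \prec a' \prec x \prec b' \prec b$; then $(a', b') \in \mathcal{B}$ and $x \in (a', b') \subseteq U$. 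The extremum cases are handled analogously by choosing an endpoint in $D$ strictly between $x_m$ and $b$ (or between $a$ and $x_M$), which exists by density combined with the no-gap property.

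The main obstacle is largely bookkeeping around the possible existence of a minimum or a maximum, and making sure the inclusion of the corresponding rays preserves countability while still covering the boundary points. Conceptually the argument is driven entirely by two facts — density of $D$ and that between any two distinct points of $X$ lies a third — which together ensure that every open set can be approximated from inside by intervals with endpoints in $D$. Connectedness enters only indirectly, through Lemma \ref{linearcontinuum}, to supply this intermediate-point property.
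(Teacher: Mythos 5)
Your proof is correct and takes essentially the same route as the paper's: the identical countable basis of intervals with endpoints in the countable dense set $D$ (supplemented by rays at a minimum or maximum), with connectedness providing the intermediate-point property and density of $D$ supplying endpoints $a', b' \in D$ squeezed between the point and the given basic neighborhood. The only differences are cosmetic — you route the no-gap property explicitly through Lemma \ref{linearcontinuum} where the paper appeals to connectedness directly, and your separate, careful bookkeeping of the minimum and maximum rays is if anything slightly tidier than the paper's.
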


\begin{proof}
Now we find a countable basis for the order topology of $X$.
As $X$ is separable, then let $D \subseteq X$ be countable and dense in $X$, $\ie$ $\overline{D}=X$. Then, define $$\mathcal{B}:= \{(a,b) : a,b \in D \text{ with } a \prec b\}$$ together with, if there exists a minimal element $m:= \min{X}$ and a maximal element $M:= \max{X}$, the set $\{[m,a) , (a,M], a\in D \}$. In both cases, the collection $\mathcal{B}$ forms a countable base for the topology of $X$. To prove this, we show that for each open set $(a,b)$ of the order topology of $X$ and for every $x \in (a,b)$ there is an element $(a',b') \in \mathcal{B}$ such that $x\in (a',b') \subseteq (a,b)$.

Suppose $x \in (a,b) \subset X$, then the open intervals $(a,x)$ and $(x,b)$ cannot be empty by connectedness. Hence, there exist $a' \in (a,x) \cap D$ and $b' \in (x,b) \cap D$. This follows from the fact that $\overline{D} = X$ and $x \in \overline{D} = X$ if and only if every open set containing $x$ intersects $D$. Then, it follows that $x \in (a',b') \subseteq (a,b)$.  

Now, when $m$ exists, suppose $x = m$, then $x \in [m,a)$ and this set is nonempty by connectedness. Hence, there exists an element $a'' \in [m,a) \cap D$. So, it follows that $x \in [m,a'') \subseteq [m,a)$. Analogous is the case when $M$ exists.
\end{proof}

By Lemma \ref{regular} and Lemma \ref{secondcountable} , $X$ satisfies all the assumptions of the Urysohn metrization theorem, hence $X$ is metrizable (and, a fortiori, it is first-countable).

\begin{lemma} \label{sequencelemma}
Let $A \subseteq X$ with $X$ metrizable, then $x \in \overline{A}$ if and only if there exists a sequence of points of $A$ converging to $x$.
\end{lemma}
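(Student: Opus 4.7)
The plan is to prove both implications separately, with the forward (nontrivial) direction exploiting the metric $d$ on $X$ supplied by the Urysohn metrization theorem, and the reverse direction being a purely topological consequence of the definition of convergence.

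For the reverse implication, suppose there exists a sequence $(x_n) \subseteq A$ with $x_n \to x$. Let $U$ be any neighborhood of $x$ in $X$. By the definition of convergence, there is an $N \in \mathbb{N}$ such that $x_n \in U$ for all $n \geq N$. Since $x_N \in A$ as well, this gives $U \cap A \neq \emptyset$. Because every neighborhood of $x$ meets $A$, by the standard characterization of the closure we conclude $x \in \overline{A}$. Notice that this direction uses nothing but the topology and is valid in any topological space.

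For the forward implication, assume $x \in \overline{A}$, and invoke metrizability to fix a metric $d$ generating the topology of $X$. For each $n \in \mathbb{N}$, the open ball $B(x, 1/n)$ is an open neighborhood of $x$, and since $x \in \overline{A}$, we have $B(x, 1/n) \cap A \neq \emptyset$. By choice (appealing to the axiom of countable choice), pick $x_n \in B(x, 1/n) \cap A$ for each $n$. Then $d(x_n, x) < 1/n \to 0$, so $x_n \to x$ in the metric topology, i.e. in the topology of $X$. This produces the required sequence in $A$.

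The only subtle point, and the main obstacle, is ensuring that the argument truly uses the metric structure and not merely first-countability; but once we have $d$ at hand the construction of the balls $B(x,1/n)$ gives a countable neighborhood base at $x$ that is nested and shrinks to $x$, which is exactly what makes the choice $x_n \in B(x,1/n) \cap A$ yield convergence. Everything else is a direct unpacking of definitions, so no further topological machinery is needed beyond what Lemma \ref{regular} and Lemma \ref{secondcountable} already supply through the Urysohn metrization theorem.
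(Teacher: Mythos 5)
Your proof is correct and follows essentially the same route as the paper's: the easy direction is the definitional unpacking of convergence and closure, and the forward direction picks $x_n \in B_d(x,1/n)\cap A$ using the metric from the Urysohn metrization theorem and verifies convergence. Even your closing remark about needing only a countable nested neighborhood base, not the full metric structure, mirrors the paper's own footnote on that point.
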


\begin{proof}
Suppose $x_n \rightarrow x$ with $x_n \in A$. Then, every neighborhood $U$ of $x$ contains a point of $A$, $\ie$ $ x \in \overline{A}$.
Conversely, we use the fact that $X$ is metrizable.\footnote{Note, once again, that here we do not need the full strength of metrizability. All we really need is a countable collection of neighborhoods around $x$. Moreover, both connectedness and separability are not necessary conditions. We refer the interested reader to the nice two-page paper of Lutzer [\ref{lutzer}], that proves a linearly ordered space $X$ is metrizable in the order topology if and only if the diagonal $\Delta :=\{(x, x): x \in X\}$ is a countable intersection of open subsets of $X \times X$, $\ie$ the diagonal is a $G_{\delta}$ set. Furthermore, this condition can be shown to be equivalent to have a $\sigma$-locally countable basis, which is a condition more in the spirit of the Nagata-Smirnov metrization theorem which requires a $\sigma$-locally finite basis.} Let $x\in \overline{A}$ and let $d$ be a metric that induces the order topology. For every $n \in \mathbb{N}$, we take the neighborhood $B_{d}(x, \frac{1}{n})$, of $x$ of radius $\frac{1}{n}$ and we choose $x_n$ to be a point such that, for all $n$, $x_n \in B_{d}(x, \frac{1}{n})\cap A$. We show $x_n \rightarrow x$. Any open set $U$ containing $x$ contains an $\epsilon$-neighborhood $B_{d}(x, \epsilon)$ centered at $x$. Choosing $N$ such that $\frac{1}{N} < \epsilon$, then $U$ contains $x_n$ for all $n \ge N$.
\end{proof}

We can finally prove Lemma \ref{metric}.
\begin{proof}
The $\textit{if}$ part comes trivially by definition. If there exists a sequence that converges (monotonically) to $x$, then $x \in \overline{A}$ by Lemma \ref{sequencelemma}.

Conversely, if $x \in \overline{A}$, then by Lemma \ref{sequencelemma} we know that there exists a sequence in $A$ converging to $x$. Now we show that, in every totally ordered set $(X, \precsim)$, every sequence from $\mathbb{N} \rightarrow (X, \precsim)$ has a monotone subsequence. Indeed, this is a property that has nothing to do with the topology of $X$.

Let $(x_{i})_{i \in \mathbb{N}}$ be a sequence with values in $X$. We say that $x_{k}$ is a $\textit{peak}$ of the sequence if $h > k \Rightarrow x_{h} \precsim x_{k}$ (we admit a slight abuse of notation here, as it would be better to call $\textit{peak}$ the index of the sequence, and not its image). We distinguish two cases: if there are infinitely many peaks, then the subsequence of peaks is an infinite non-increasing sequence and we are done. If there are only finitely many peaks, then let $i_1$ be the index such that $x_{i_{1}}$ is the successor of the last peak. Then, $x_{i_{1}}$ is not a peak. Again, we find another index $i_2 > i_1$ such that $x_{i_{2}} \succsim x_{i_{1}}$. Again, as $x_{i_{2}}$ is not a peak, we can find another index $i_3 > i_2$ such that $x_{i_{3}} \succsim x_{i_{2}} \succsim x_{i_{1}}$. Keeping defining the sequence in this way, we get, inductively, a non-decreasing sequence.

In conclusion, as by assumption we have a sequence $(x_n) \in A^{\mathbb{N}}$ converging to $x$, this sequence admits a monotone subsequence. But, if a sequence converges to a point $x$, then all of its subsequences converge to the same point $x$. Hence, there exists a sequence that converges monotonically to $x$, proving Lemma \ref{metric}.
\end{proof}

Note that Lemma \ref{metric} could have been proven just using the notion of first countability. Nevertheless, we decided to take the longer path of Urysohn metrization theorem to show how ``well-behaved" a totally ordered, connected and separable topological space can be.

\begin{lemma} \label{producttopology}
Let $(X, \succsim)$ be a topological space with the order topology. Let $\wpp$ be another order relation on $X \times X$ such that $\Aone$ and $\Athree$ hold,\footnote{Note that the order topology and $\Aone$ are redundant assumptions. The lemma follows immediately by continuity of $\wpp$ alone.} and suppose $(x_n)$, $(y_n)$ converge to $x$ and $y$ respectively, and $(w_n)$, $(z_n)$ converge to $w$ and $z$ respectively. If for every $n \in \mathbb{N}$ we have $(x_n,y_n) \wpp (w_n,z_n)$ then $(x,y) \wpp (w,z).$
\end{lemma}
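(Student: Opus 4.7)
The plan is to deduce the conclusion directly from Axiom $\Athree$, which asserts that
\[
C := \{(a,b,c,d) \in \fourX : (a,b) \wpp (c,d)\}
\]
is closed in the product topology on $\fourX$. By hypothesis, for every $n \in \mathbb{N}$ we have $(x_n,y_n) \wpp (w_n,z_n)$, i.e., the quadruple $(x_n,y_n,w_n,z_n)$ lies in $C$. It therefore suffices to show that this quadruple sequence converges in the product topology to $(x,y,w,z)$, and then invoke sequential closedness of $C$.

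First I would verify the product-topology convergence. A basic open neighborhood of $(x,y,w,z)$ in $\fourX$ has the form $U_1 \times U_2 \times U_3 \times U_4$ with $U_1, U_2, U_3, U_4$ neighborhoods of $x, y, w, z$ respectively in $X$. By assumption $x_n \to x$, $y_n \to y$, $w_n \to w$, $z_n \to z$, so there exist integers $N_1, N_2, N_3, N_4$ such that $x_n \in U_1$ for $n \ge N_1$, and analogously for the other three. Taking $N := \max\{N_1,N_2,N_3,N_4\}$, we obtain $(x_n,y_n,w_n,z_n) \in U_1 \times U_2 \times U_3 \times U_4$ for every $n \ge N$. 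This is the definition of convergence of $(x_n,y_n,w_n,z_n)$ to $(x,y,w,z)$ in $\fourX$.

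Second, I would apply the standard fact that a closed set in any topological space is sequentially closed: if $p_n \in C$ for all $n$ and $p_n \to p$, then every neighborhood of $p$ meets $C$, hence $p \in \overline{C} = C$. Setting $p_n := (x_n,y_n,w_n,z_n)$ and $p := (x,y,w,z)$, this yields $(x,y,w,z) \in C$, which is precisely $(x,y) \wpp (w,z)$.

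There is essentially no obstacle: the lemma is a routine unpacking of $\Athree$, as the footnote to the statement already anticipates. The only mildly subtle point is that the product-topology convergence is phrased in terms of arbitrary neighborhoods rather than coordinate-wise limits in a metric sense; but since basic open sets of $\fourX$ are products of opens in $X$, coordinate-wise convergence transfers immediately. Note that neither $\Aone$ nor the order-topology assumption on $X$ plays any role in the argument, confirming the redundancy observed in the footnote.
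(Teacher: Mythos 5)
Your proposal is correct and follows essentially the same route as the paper: identify the quadruple sequence as lying in the set that $\Athree$ declares closed, verify coordinatewise convergence gives convergence in the product topology, and conclude via the fact that closed sets are sequentially closed. The only cosmetic difference is that the paper also proves the (unneeded) converse direction of the product-convergence equivalence, whereas you prove only the implication actually used.
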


\begin{proof}
Denote the set $A:= \{(x,y,w,z) \in \fourX : (x,y) \wpp (w,z)\}$ and pick a sequence of points with values in $A$, that is pick $(x_n,y_n, w_n,z_n) \in A^{\mathbb{N}}$ converging to $(x,y,w,z)$. By assumption, we have that $x_n \rightarrow x$, $y_n \rightarrow y$, $w_n \rightarrow w$, $z_n \rightarrow z$ and this is equivalent to $(x_n,y_n, w_n,z_n) \rightarrow (x,y,w,z)$. Indeed, a sequence in the product space $\fourX$ converges to $(x,y,w,z)$ if and only if it converges componentwise, $\ie$ $x_n \rightarrow x$, $y_n \rightarrow y$, $w_n \rightarrow w$, $z_n \rightarrow z$. We now prove this fact.

Assume $\fourxn \rightarrow \fourx$ in $\fourX$. Let $U_1, U_2, U_3, U_4$ be open sets containing $x,y,w,z$, respectively. Then $U_1 \times U_2 \times U_3 \times U_4$ is a basis element (hence, open) for the product topology containing $\fourx$. By definition of convergence, we can find $n_0$ such that for all $n \ge n_0$ we have $\fourxn \in U_1 \times U_2 \times U_3 \times U_4$. Thanks to the fact that projections are continuous functions, they preserve convergent sequences and so for all $n \ge n_0$ we have $x_n \in U_1$, $y_n \in U_2$, $w_n \in U_3$, $z_n \in U_4$, $\ie$ $x_n \rightarrow x$, $y_n \rightarrow y$, $w_n \rightarrow w$, $z_n \rightarrow z$.

Conversely, if $x_n \rightarrow x$, $y_n \rightarrow y$, $w_n \rightarrow w$, $z_n \rightarrow z$, let $U_{\star}$ be an open subset of $\fourX$ such that $\fourx \in U_{\star}$. By definition of product topology, we can find $U_1 \subseteq X$ open in $X$, $\dots$, $U_4 \subseteq X$ open in $X$ such that $x \in U_1$, $y \in U_2$, $w \in U_3$, $z \in U_4$. By convergence, we have that for all $i=1,2,3,4$ there exists $n_{k_{i}} \in \mathbb{N}$ such that for all $n \ge n_{k_{i}}$ we have $x_n \in U_1$, $y_n \in U_2$, $w_n \in U_3$, $z_n \in U_4$. Now pick $N:=\text{max} \{n_{k_{1}},n_{k_{2}},n_{k_{3}},n_{k_{4}}\} $ and for every $n \ge N$ we have $\fourxn \in U_1 \times U_2 \times U_3 \times U_4 \subseteq U_{\star}$. Hence, by definition of convergence, $\fourxn \rightarrow \fourx$.

Now we want to show $\fourx \in A$, with $A$ closed in the product topology. We now prove that every closed set in the product topology is sequentially closed.\footnote{Note that when $X$ is metrizable, a set $C \subseteq X$ is closed $\iff$ $C$ is sequentially closed.} This means we want to show that if we pick a sequence of points $\fourxn$ with values in $A \subseteq X$ that is converging to a point $\fourx \in X$, then $\fourx \in A$.
Pick a sequence $\fourxn$ with values in $A \subseteq X$ that is converging to a point $\fourx \in X$. Then, let $U_{\star}$ be any neighborhood of $\fourx$. By convergence, there exist an $n_0 \in \mathbb{N}$ such that for all $n \ge n_0$ we have $\fourxn \in U_{\star}$ and, in particular, $\fourxn \in U_{\star} \cap A$. Since $U_{\star}$ was an arbitrary but fixed neighborhood of $\fourx$, then $\fourx$ is in the closure of $A$, $\ie$ $\fourx \in \overline{A}$. But $A$ is closed, therefore $A=\overline{A}$, so $\fourx \in A$, hence $(x,y) \wpp (w,z)$. 
\end{proof}

The proof of Theorem \ref{thm1} in chapter 3, as in the original version of Shapley [\ref{shapley75}], relies on two very interesting lemmas. Similar propositions have been taken as axioms in environments that lack the topological assumptions on the set of alternatives $X$.

\begin{lemma}\label{1}
Let (w,z) be an element of $X \times X$. If $x', x'', y \in X$ are such that:
\begin{equation} (x',y) \succcurlyeq (w,z) \succcurlyeq (x'',y) \end{equation}
then there exists a unique, up to indifference, $x^{\star} \in X$ such that 
\begin{equation}\label {eq1}
(x^{\star},y) \sim (w,z) \end{equation} and $x' \succsim x^{\star} \succsim x''$.
\end{lemma}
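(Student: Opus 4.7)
The plan is to reduce the lemma to an intermediate-value argument exploiting connectedness. Let
$$I := \{t \in X : x'' \precsim t \precsim x'\}$$
and split it into the two sets $A := \{t \in I : (t,y) \succcurlyeq (w,z)\}$ and $B := \{t \in I : (w,z) \succcurlyeq (t,y)\}$. By hypothesis, $x' \in A$ and $x'' \in B$, so both are nonempty, and completeness of the weak order $\succcurlyeq$ gives $A \cup B = I$. Any point $x^{\star} \in A \cap B$ satisfies $(x^{\star}, y) \sim (w, z)$ together with the required bound $x' \succsim x^{\star} \succsim x''$, so it suffices to force $A \cap B$ to be nonempty.

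I would next verify that $A$ and $B$ are closed in $I$. Axiom $\Athree$ says that the set $S := \{(a,b,c,d) \in X \times X \times X \times X : (a,b) \succcurlyeq (c,d)\}$ is closed in the product topology, while the map $t \mapsto (t,y,w,z)$ from $X$ to $X^{4}$ is continuous (three coordinates being constant). Its preimage $\{t \in X : (t,y) \succcurlyeq (w,z)\}$ is therefore closed in $X$, and intersecting with the closed interval $I$ yields $A$; the same reasoning with the map $t \mapsto (w,z,t,y)$ handles $B$.

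To force $A \cap B \neq \emptyset$ I would invoke connectedness of $I$. By Lemma \ref{linearcontinuum}, $X$ is a linear continuum under the order topology; since $I$ is order-convex, its subspace topology agrees with its own order topology, and it inherits both the least-upper-bound property and the density-of-elements property, making $I$ itself a linear continuum and hence connected. Two nonempty closed sets that cover a connected space cannot be disjoint, so $A \cap B \neq \emptyset$ and existence follows. For uniqueness up to indifference, if $x_{1}$ and $x_{2}$ both satisfy equation (\ref{eq1}), then transitivity of $\sim$ gives $(x_{1}, y) \sim (x_{2}, y)$, and applying $\Aone$ in both directions with common second coordinate $y$ yields $x_{1} \succsim x_{2}$ and $x_{2} \succsim x_{1}$, i.e., $x_{1} \sim x_{2}$.

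The main subtle point is the invocation of Lemma \ref{linearcontinuum}, which is stated for a total, hence antisymmetric, order, whereas the ambient assumption here is only that $\succsim$ is a weak order. The clean fix is to work on the quotient $X/{\sim}$, on which the induced preference is a genuine total order and which remains connected and separable; the sets $A$ and $B$, the closed set $S$ of Axiom $\Athree$, and the consistency axiom $\Aone$ all descend to the quotient, so the argument is unchanged. A brief degenerate case $x' \sim x''$ collapses $I$ to a single indifference class and is handled by direct inspection of the hypothesis, with $x^{\star} := x'$.
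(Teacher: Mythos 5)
Your proof is correct, and it takes a genuinely different route from the paper's. The paper argues by extremal points: it defines $x_0$ as the infimum of $\{x \in X : (x,y) \succcurlyeq (w,z)\}$ and $x^0$ as the supremum of $\{x \in X : (w,z) \succcurlyeq (x,y)\}$, places both points in $X$ via Lemma \ref{wlc}, shows each belongs to its defining set via $\Athree$ together with Lemma \ref{sup}, and then uses the density clause of Lemma \ref{linearcontinuum} to rule out $x_0 \succ x^0$, forcing $x_0 \sim x^0$ and hence $(x_0,y) \sim (w,z)$. You replace all of this with a covering argument: two nonempty closed sets that cover a connected space must intersect. This is more elementary, in that it uses connectedness directly in its defining form, bypassing the inf/sup machinery (Lemmas \ref{wlc} and \ref{sup}) that the paper erects on top of connectedness; your identification of the sets as preimages of the closed set of $\Athree$ under the continuous maps $t \mapsto (t,y,w,z)$ and $t \mapsto (w,z,t,y)$ is exactly the right way to make closedness precise. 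The uniqueness arguments are identical.

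One remark: restricting to the interval $I$ is what creates your only two complications, namely proving that $I$ is connected --- for which you need the implication ``linear continuum $\Rightarrow$ connected,'' which the paper asserts only in a footnote and never proves (it is standard, e.g.\ Munkres) --- and the total-order hypothesis of Lemma \ref{linearcontinuum}, which drives you to the quotient $\equivclass{X}$. Both disappear if you run the same covering argument on all of $X$: the sets $\hat{A} := \{t \in X : (t,y) \succcurlyeq (w,z)\}$ and $\hat{B} := \{t \in X : (w,z) \succcurlyeq (t,y)\}$ are closed in $X$ by $\Athree$ and continuity, they cover $X$ by completeness of $\succcurlyeq$, and they contain $x'$ and $x''$ respectively, so connectedness of $X$ --- a hypothesis, not a lemma --- yields $x^\star \in \hat{A} \cap \hat{B}$, i.e.\ $(x^\star,y) \sim (w,z)$. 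The bounds $x' \succsim x^\star \succsim x''$ then follow from transitivity of $\succcurlyeq$ and $\Aone$, just as in your uniqueness step, and antisymmetry is never used, so the weak-order case needs no quotient detour at all.
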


\begin{proof}
Define $x_{0}:=\inf\{x\in X : (x,y) \succcurlyeq (w,z)\}$ and denote $A:=\{x\in X : (x,y) \succcurlyeq (w,z)\}$ this set. The set $A$ is nonempty as $x' \in A$, $A$ is bounded below by $x''$ as we have $(w,z) \succcurlyeq (x'',y) $ and, by transitivity and $\Aone$, we reach $x \succsim x''$ for every $x \in A$. Thus, $x_{0}$ is such that $x' \succsim x_{0}  \succsim x''$ and so $x_{0} \in X$ by Lemma \ref{wlc}. Analogously, we define $x^{0}:=\sup\{x\in X : (w,z) \succcurlyeq (x,y)\}$ and denote $B:=\{x\in X : (w,z) \succcurlyeq (x,y)\}$ this set. Then, $B$ is nonempty as $x'' \in B$, $B$ is bounded above by $x'$ as we have $(x',y) \succcurlyeq (w,z)$ and, by transitivity and $\Aone$, we reach $x' \succsim x$ for every $x \in B$. Thus, $x^{0}$ is such that $x' \succsim x^{0} \succsim x''$ and so $x^{0} \in X$ by Lemma \ref{wlc}.

By $\Athree$, the sets $A$ and $B$ are closed and so, by Lemma \ref{sup}, we have $x_{0} \in A$ and $x^{0} \in B$ so that $$(x_{0},y) \succcurlyeq (w,z) \succcurlyeq (x^{0},y)$$ 

\noindent
By transitivity and by $\Aone$ we have $x_{0} \succsim x^{0}$. 

Assume now by contradiction that $x_{0} \succ x^{0}$. By Lemma \ref{linearcontinuum} there exists $x^{\star} \in X$ such that $x^{0} \prec x^{\star} \prec x_{0}$. But then, comparing $x^{\star}$ with $(w,z)$, $(x^{\star}, y) \succcurlyeq (w,z)$ can hold only if $x_{0} \sim x^{\star} \succ x^{0}$, so $x_{0} \sim x^{\star}$ and therefore $x^{\star}$ should be the infimum of $A$, reaching a contradiction. Specular is the contradiction in the other case. Therefore, as there does not exist any $x^{\star} \in X$ such that $x^{0} \prec x^{\star} \prec x_{0}$, we must conclude that $x_{0} \sim x^{0}$. By transitivity and $\Aone$ we have $$ (x_0,y) \sim (w,z) \sim (x^0,y)$$ This proves the existence of $x^{\star} \in X$ for which (\ref{eq1}) holds.

Let $\overline{x} \in X$ be any other element of $X$ for which (\ref{eq1}) holds. By transitivity, $(x^{\star},y) \sim (\overline{x},y)$. By $\Aone$, we have $x^{\star} \sim \overline{x}$ and this completes the proof.
\end{proof}

\begin{lemma}\label{2}
Let $x, z \in X$ such that $x \succ z$. Then, there exists a unique, up to indifference, $y^{\star} \in X$ such that $$(x,y^{\star})\sim (y^{\star},z)$$ and $x \succ y^{\star} \succ z$.
\end{lemma}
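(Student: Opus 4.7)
My plan is to mirror the structure of the proof of Lemma~\ref{1}, replacing the two one-sided sets used there by
\[
A := \{y \in X : (x,y) \succcurlyeq (y,z)\}
\qquad\text{and}\qquad
B := \{y \in X : (y,z) \succcurlyeq (x,y)\}.
\]
First, applying A2 with $(x,y) = (x,x)$ and $(z,w) = (z,z)$ reduces to the tautology $(x,z) \sim (x,z)$, giving $(x,x) \sim (z,z)$. Combining this with A1 and A1' yields the chain $(x,y) \precsim (x,x) \sim (z,z) \prec (x,z) \precsim (y,z)$ whenever $y \succsim x$, so no such $y$ lies in $A$; symmetrically, no $y \precsim z$ lies in $B$. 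Both sets are nonempty ($z \in A$ and $x \in B$ by A1, A1', and $x \succ z$) and closed in $X$, being the preimages of the set of A3 under the continuous maps $y \mapsto (x,y,y,z)$ and $y \mapsto (y,z,x,y)$ respectively. By Lemma~\ref{sup} I then obtain $y_0 := \sup A \in A$ with $y_0 \prec x$, and $y^0 := \inf B \in B$ with $y^0 \succ z$.

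The next step exploits the monotonicity built into $\succcurlyeq$: A1 makes $(\cdot, z)$ monotone in its first argument and A1' makes $(x, \cdot)$ antimonotone in its second, so $A$ is a lower set and $B$ is an upper set of $X$. Completeness of $\succcurlyeq$ gives $A \cup B = X$. If $y_0 \prec y^0$ strictly, Lemma~\ref{linearcontinuum} would supply a point $y$ with $y_0 \prec y \prec y^0$ belonging neither to $A$ (by the supremum property) nor to $B$ (by the infimum property), a contradiction. Hence $y_0 \succsim y^0$, and since $B$ is an upper set containing $y^0$ it also contains $y_0$. Combining $y_0 \in A$ and $y_0 \in B$ gives $(x, y_0) \sim (y_0, z)$. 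Setting $y^\star := y_0$, the bounds $y_0 \prec x$ and $y_0 \succsim y^0 \succ z$ yield $x \succ y^\star \succ z$.

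For uniqueness up to indifference I would argue by contradiction: if $\bar y_1 \prec \bar y_2$ both satisfied the defining relation, A1' would force $(x, \bar y_1) \succ (x, \bar y_2)$ and A1 would force $(\bar y_1, z) \prec (\bar y_2, z)$, producing the impossible chain $(x, \bar y_1) \succ (x, \bar y_2) \sim (\bar y_2, z) \succ (\bar y_1, z) \sim (x, \bar y_1)$. Hence any two solutions are $\sim$-equivalent. The main obstacle I expect is the bookkeeping around A1 versus A1' and verifying that $A$ and $B$ really are monotone in the correct directions and jointly exhaust $X$; once those points are nailed down, the conclusion follows from the same linear-continuum sandwich that underlies the proof of Lemma~\ref{1}.
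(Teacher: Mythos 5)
Your proposal is correct and takes essentially the same approach as the paper's own proof: you work with the same two sets (the paper's $C$ and $D$), obtain closedness from $\mathbf{A3}$, place $\sup A$ and $\inf B$ inside them via Lemma \ref{sup}, rule out a gap between them via Lemma \ref{linearcontinuum}, and conclude indifference through the $\mathbf{A1}$/$\mathbf{A1'}$ monotonicity. Your explicit verification of the lower-set/upper-set structure, of $A \cup B = X$, of the strict bounds $x \succ y^{\star} \succ z$, and of uniqueness merely spells out steps the paper leaves implicit, so the two arguments coincide in substance.
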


\begin{proof}

Define $y^{0}$ to be the least upper bound of the set $C:=\{y\in X : (x,y) \succcurlyeq  (y,z)\}$. This set is nonempty as if we pick $y=z$ we have $(x,z) \succcurlyeq  (z,z)$ that by $\Aone$ is equivalent to $x \succsim  z$, that holds by assumption. $C$ is also bounded from above by $x$ as if we pick $y=x$ we have $(x,x) \succcurlyeq  (x,z)$ that by $\Aoneprime$ is equivalent to $z \succsim  x$, that, by completeness, contradicts the assumption of $x \succ  z$ showing that $x$ is an upper bound for $C$. Since $C$ is nonempty and bounded above by $x$, by Lemma \ref{wlc} we have $y^{0} \in  X$.

Similarly, by defining $y_{0}$ to be the greatest lower bound of the set $D:=\{y\in  X : (y,z) \succcurlyeq  (x,y)\}$. This set is nonempty as if we pick $y=x$ we have $(x,z) \succcurlyeq  (x,x)$ that by $\Aoneprime$ is if and only if $x \succsim  z$, that holds by assumption. This set is also bounded from below by $z$ as if we pick $y=z$ we have $(z,z) \succcurlyeq  (x,z)$ that by $\Aone$ is if and only if $z \succsim  x$, that, by completeness, contradicts the assumption of $x \succ  z$ showing that $z$ is a lower bound for $D$. Since $D$ is nonempty and bounded below by $z$, by Lemma \ref{wlc} we have $y_{0} \in X$.

By $\Athree$ the sets $C$ and $D$ are closed, so by Lemma \ref{sup} we have $y^{0} \in  C$ and $y_{0} \in D$, that is \begin{equation}\label{eq7}
(x,y^0) \succcurlyeq  (y^0,z) \text{ and } (y_0,z) \succcurlyeq  (x,y_0)
\end{equation}
We show now that $y^{0} \succsim y_{0}$. Suppose, by contradiction, $y_{0} \succ  y^{0}$. By Lemma \ref{linearcontinuum} there exists $y^{\star} \in  X$ such that $y_{0} \succ  y^{\star} \succ  y^{0}$. Then, by definition of $y_{0}$ we have $(y^{\star},z) \prec (x, y^{\star})$, while by the definition of $y^{0}$ we have $(x,y^{\star}) \prec (y^{\star}, z)$. This contradiction shows that $y^{0} \succsim y_{0}$. By $\Aone$ this is equivalent to 
\begin{equation}\label{eq8}
(y^0,z) \succcurlyeq  (y_0,z) \text{ for all } z \in  X.
\end{equation}
By $\Aoneprime$ it is also equivalent to 
\begin{equation}\label{eq9}
(x,y_0) \succcurlyeq  (x,y^0) \text{ for all } x \in  X. 
\end{equation}
Putting together equation \ref{eq7} with equations \ref{eq8} and \ref{eq9}, we reach the loop $$(y^0,z) \succcurlyeq  (y_0,z) \succcurlyeq  (x,y_0) \succcurlyeq  (x,y^0) \succcurlyeq  (y^0,z).$$ By transitivity, we have $(y^0,z) \sim  (y_0,z)$ and $(x, y_0) \sim  (x,y^0)$. By $\Aone$, we conclude that $y^0 \sim  y_{0}$.
\end{proof}

We conclude proving that from $\Aone$, $\Atwo$ and $\Athree$ we can derive $\Aoneprime$. 


\begin{lemma} \label{lemmaaprime}
Let $X$ be a connected subset of a topological space. If $\succsim$ is complete and transitive, $\wpp$ is complete, transitive, satisfies $\Athree$ and $\Atwo$, and jointly $\succsim$ and $\wpp$ satisfy $\Aone$, then $\Aoneprime$ holds, that is, for all $x,y,z \in X$ we have $x \succsim y$ if and only if $(z,y) \wpp (z,x)$.
\end{lemma}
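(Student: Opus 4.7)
The plan is to split the target biconditional into an indifference characterization flowing purely from $\Aone$ and $\Atwo$, and a strict-inequality piece that requires the topological axioms $\Athree$ together with connectedness of $X$. Assembling the two cases via completeness then yields $\Aoneprime$.

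First I would record the \emph{null-pair identity} $(w,w) \sim (w',w')$ for every $w,w' \in X$: $\Atwo$ converts the candidate $(w,w) \sim (w',w')$ into the tautology $(w,w') \sim (w,w')$, so the identity is free. Next I would establish the indifference characterization
\[
x \sim y \;\iff\; (z,x) \sim (z,y) \qquad \text{for every } z \in X.
\]
If $x \sim y$, then $\Aone$ gives $(x,z) \sim (y,z)$; $\Atwo$ applied to this yields $(x,y) \sim (z,z)$; and $\Atwo$ applied once more to the symmetric statement $(z,z) \sim (x,y)$ yields $(z,x) \sim (z,y)$. The converse reverses these two applications of $\Atwo$ and closes with $\Aone$. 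Combined with the null-pair identity, this already gives $\Aoneprime$ whenever $x \sim y$.

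For the strict case, fix $x \succ y$ and set
\[
A := \{z \in X : (z,y) \succ (z,x)\}, \qquad B := \{z \in X : (z,x) \succ (z,y)\}.
\]
The indifference characterization rules out $(z,x) \sim (z,y)$ for every $z$, so completeness of $\wpp$ makes $\{A,B\}$ a partition of $X$. The map $z \mapsto (z,x,z,y)$ is continuous into $X^{4}$, so $\Athree$ makes $\{z : (z,x) \wpp (z,y)\}$ closed; hence $A$ is open, and symmetrically so is $B$. To break the symmetry, I would test $z = x$: $\Aone$ together with $x \succ y$ gives $(x,y) \succ (y,y)$, and the null-pair identity upgrades this to $(x,y) \succ (x,x)$, so $x \in A$. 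Connectedness of $X$ then forces $B = \emptyset$ and $A = X$, i.e., $(z,y) \succ (z,x)$ for every $z \in X$.

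The two cases together yield $x \succsim y \Rightarrow (z,y) \wpp (z,x)$, and the reverse implication is an immediate contraposition: $y \succ x$ would place every $z$ in $B$ by the same argument with the roles of $x$ and $y$ swapped, contradicting $(z,y) \wpp (z,x)$. The main obstacle is precisely the strict case: the $\Atwo$-manipulations only produce indifferences, and extracting a \emph{direction} for $\succ$ requires both the continuity axiom $\Athree$ (to show the two candidate sets are open) and connectedness of $X$ (to rule one out), with the null-pair identity supplying the single anchor point $z = x$ needed to pick the correct side.
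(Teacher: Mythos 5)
Your proof is correct, but it takes a genuinely different route from the paper's. The paper argues by contradiction: assuming $(z,y) \wpp (z,x)$ with $y \succ x$, it splits into the cases $y \succ z$ and $y \precsim z$, invokes Lemma \ref{1} to produce an intermediate element $w$ with $(w,x) \sim (z,y)$ and $x \succsim w \succsim z$, and then chains $\Aone$/$\Atwo$ manipulations to reach the absurdity $(z,z) \succ (z,z)$. You instead prove the statement directly: after isolating the null-pair identity and the pointwise indifference characterization $(z,x) \sim (z,y) \iff x \sim y$ (both pure $\Aone$/$\Atwo$ algebra, matching steps the paper also uses implicitly), you handle the strict case by partitioning the space of test points $z$ into the two sets $A$ and $B$, showing each is open because $z \mapsto (z,x,z,y)$ is continuous and $\Athree$ gives closedness of the weak-preference set, anchoring $x \in A$, and letting connectedness annihilate $B$. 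Both proofs ultimately rest on connectedness plus $\Athree$, but the paper consumes them indirectly through Lemma \ref{1}, which in turn depends on the sup/inf machinery of Lemmas \ref{linearcontinuum}, \ref{wlc} and \ref{sup}; your clopen-partition argument consumes them directly and needs none of that scaffolding, no case split on the position of $z$, and no appeal to least upper bounds (so no hidden reliance on antisymmetry or the linear-continuum structure). What the paper's route buys is economy within its own architecture, since Lemma \ref{1} must be established anyway for the main theorem; what yours buys is a shorter dependency chain and an argument that works verbatim for weak orders.
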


\begin{proof}
By contradiction, suppose $\Aoneprime$ fails. Then, there exist $x,y,z \in X$ such that $(z,y) \wpp (z,x)$ and $x \prec y$. We consider two cases: $y \succ z$ and $ y \precsim z$. 

If $y \succ z$ then, being $(z,y) \wpp (z,x)$ by assumption, we have $$(x,x) \sim (y,y) \succ (z,y) \wpp (z,x)$$ by $\Atwo$ and $\Aone$, respectively. We apply Lemma \ref{1} to find a $w \in X$ such that $$(w,x) \sim (z,y) \text{  and  } x \succsim w \succsim z.$$ Being $y \succ x$, we have $$(z,z) \sim (y,y) \succ (x,y) \sim (w,z) \succsim (z,z)$$ by $\Atwo$, $\Aone$, $\Atwo$, $\Aone$, respectively. This implies a contradiction in the case $y \succ z$.

Assume now $y \precsim z$. Being $y \precsim z$ and $x \prec y$, by transitivity we have $x \prec z$. We can proceed as in the previous case, interchanging the roles of $x$ and $y$ and reversing all the inequalities.
\end{proof}

\section{The theorem}

\renewcommand{\appendixtocname}{Appendix}
\renewcommand{\appendixpagename}{Appendix}

We can now state and prove Shapley's theorem in our general version.

\begin{theorem}\label{thm1}
Let $X$ be a connected and separable subset of a topological space. If $\succsim$ is complete and transitive, $\wpp$ is complete, transitive, satisfies $\Atwo$ and $\Athree$, and jointly $\succsim$ and $\wpp$ satisfy $\Aone$, then the pair $(\succsim, \wpp)$ can be represented by a continuous measurable utility function u: $ X \rightarrow \mathbb{R}$, that is, for each pair $x,y \in X$,
\begin{equation}\label{eq1.10}
 x \succsim  y \iff u(x) \ge u(y) 
\end{equation}
and for each quadruple $x,y,z,w \in X$,
\begin{equation}\label{eq1.11}
 (x,y) \succcurlyeq  (z,w) \iff u(x) - u(y) \ge u(z) - u(w) .
\end{equation}
Moreover, $u$ is unique up to positive affine transformations.
\end{theorem}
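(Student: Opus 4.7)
The plan is to adapt Shapley's bisection construction to this topological setting, using the preceding lemmas as substitutes for the linear and topological structure of $\mathbb{R}$. After disposing of the trivial case where all of $X$ lies in a single $\sim$-class (set $u$ constant), I fix $x_0 \prec x_1$ in $X$ and impose the normalization $u(x_0) = 0$, $u(x_1) = 1$.

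The first step is to construct $u$ on a countable ``dyadic scaffold'' $\{x_q : q \in D\}$, where $D$ denotes the dyadic rationals. Lemma \ref{2} supplies a midpoint $x_{1/2}$ with $(x_1, x_{1/2}) \sim (x_{1/2}, x_0)$; iterating the bisection produces $x_q$ for every $q \in D \cap [0,1]$. To extend beyond $[x_0, x_1]$, Lemma \ref{1} supplies, for example, $x_2$ with $(x_2, x_1) \sim (x_1, x_0)$, and symmetrically in the other direction, defining $x_q$ for every $q \in D$. Using $\Aone$, $\Atwo$, the derived $\Aoneprime$ (Lemma \ref{lemmaaprime}), and repeated applications of Lemmas \ref{1} and \ref{2}, one checks inductively that on the scaffold $x_p \succsim x_q \iff p \ge q$ and $(x_p, x_q) \succcurlyeq (x_r, x_s) \iff p - q \ge r - s$.

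The second step is to extend $u$ to all of $X$. By Lemma \ref{linearcontinuum}, $X$ is a linear continuum; combined with separability and Lemma \ref{metric}, I expect the scaffold to be order-dense and every point of $X$ to be a monotone limit of scaffold points. I then define $u(x) := \sup \{q \in D : x_q \precsim x\}$, well-defined by Lemma \ref{sup} together with Lemma \ref{wlc}. Monotonicity of $u$ transfers from the scaffold by Lemma \ref{wlc}; to transfer (\ref{eq1.11}), I pick monotone scaffold sequences converging to $x, y, z, w$, apply the scaffold-level equivalence, and pass to the limit via Lemma \ref{producttopology}, which turns $\Athree$ into the sequential closedness of $\wpp$ under componentwise convergence. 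Continuity of $u$ then follows from Theorem \ref{Eilenbergthm} applied to the continuous order $\succsim$ (continuity of $\succsim$ being the content of the earlier lemma deriving it from $\Aone$ and $\Athree$); uniqueness up to positive affine transformations is Proposition~1.

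The main obstacle will be establishing the density of the dyadic scaffold: ordinary bisection yields order-density inside $[x_0, x_1]$, but one must show that the Lemma \ref{1}-driven outward extensions reach every element of $X$, and that the resulting countable scaffold is topologically dense in the order topology. The essential inputs will be the combination of connectedness (eliminating gaps, via Lemma \ref{linearcontinuum}) and separability (characterizing closures by monotone sequences, via Lemma \ref{metric}), together with the closedness of $\wpp$ inherited from $\Athree$. Once density is in place, Lemma \ref{producttopology} cleanly transfers both (\ref{eq1.10}) and (\ref{eq1.11}) from the scaffold to all of $X$.
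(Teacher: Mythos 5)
Your proposal follows essentially the same route as the paper's proof---a dyadic scaffold built from the midpoint Lemma \ref{2} and the solvability Lemma \ref{1}, dyadic values assigned on it, and an extension to $X$ by monotone limits using Lemma \ref{producttopology}---but it leaves unproven precisely the step on which the whole theorem turns: the density of the scaffold in $X$. You name this as ``the main obstacle'' and list plausible ingredients, yet give no argument, and your preliminary claim that ``ordinary bisection yields order-density inside $[x_0,x_1]$'' is itself unjustified: the midpoint operation is defined through the intensity relation $\wpp$, which has no a priori quantitative link to the order topology of $\succsim$, so nothing rules out that the dyadic points accumulate strictly below some point of $X$ and never separate it from its predecessors. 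Without density, your definition $u(x)=\sup\{q\in D: x_q \precsim x\}$ can fail even (\ref{eq1.10}): if no scaffold point separates $x \prec y$, then $u(x)=u(y)$ while $y\succ x$. The paper closes this gap in two nontrivial stages. First, no level-$n$ set $\mathcal{A}_n$ has an accumulation point in $X$: if $a^n_{p_k}\uparrow a^\star\in X$, then $(a^n_{1+p_k},a^n_{p_k})\wpp \mathbf{1}_n$ for all $k$, and Lemma \ref{producttopology} gives $(a^\star,a^\star)\wpp \mathbf{1}_n\succ\mathbf{0}$, a contradiction. Second, a squeeze: for $x\succsim a_0$ set $y_n:=\sup\{y\in\mathcal{A}_n : x\succsim y\}$; non-accumulation forces $y_n\in\mathcal{A}_n$, one shows $\mathbf{1}_n\succ (x,y_n)\wpp\mathbf{0}$, and---the decisive observation---the units themselves collapse along pairs, since $\mathbf{1}_n\sim (a^{n-1}_1,a^n_1)$ with $a^n_1$ monotonically convergent to some $a^\star\in X$, so that passing to the limit with Lemma \ref{producttopology} yields $(a^\star,a^\star)\wpp(x,y^\star)\wpp\mathbf{0}$, hence $x\sim y^\star\in\overline{\mathcal{A}_\infty}$. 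This squeeze, not separability alone, is what makes the scaffold dense; it is the content your outline is missing.

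Beyond this, several secondary steps are off. The scaffold need not exist for every dyadic $q$: Lemma \ref{1} applies only when its two-sided hypothesis holds, and when $X$ is utility-bounded the outward unit steps terminate, which is why the paper indexes each $\mathcal{A}_n$ by a subset $P\subseteq\mathbb{Z}$ and carries the alternatives $(i)$/$(ii)$ through the construction. Theorem \ref{Eilenbergthm} guarantees that \emph{some} continuous utility for $\succsim$ exists; it says nothing about the particular $u$ you constructed, which must additionally represent $\wpp$, so continuity has to come, as in the paper, from the limit definition of $u$ on the dense scaffold. Uniqueness cannot simply be referred to Proposition 1, whose proof establishes only the easy converse (affine images of a measurable utility are again measurable utilities); the theorem's uniqueness assertion is proved by normalizing a competing representation at $a_0,a_1$ and checking it must agree with $u$ at every step of the construction. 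Finally, you never treat the case where $\succsim$ has nontrivial indifference classes; the paper first proves the antisymmetric case and then passes to the quotient $\equivclass{X}$ with the induced orders.
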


\begin{proof}

We first prove the result when $\succsim$ is antisymmetric. In view of Lemma \ref{linearcontinuum}, throughout the proof we will consider suprema and infima of subsets of $X$.

Suppose $X$ is not a singleton, otherwise the result is trivially true. Let $a_0, a_1 \in X$ be two distinct elements of $X$ such that, without loss of generality, $a_1 \succ  a_0$. 

Assign $u(a_0)=0$ and $u(a_1)=1$. Now we want to show that $u$ has a unique extension on $X$ which is a measurable utility function for $(\succsim, \wpp)$.
To ease notation, denote $$\bold1:=(a_1,a_0) \text{  ,  } \bold0:=(a_0,a_0) \text{  ,  } \bold{-1}:=(a_0,a_1).$$

\noindent
Clearly, $\bold1, \bold0 ,\bold{-1} \in X \times X$ and, by $\Aone$ and $\Aoneprime$, $\bold1 \succ \bold0 \succ \bold{-1}$.
Then, by $\Atwo$ we have $(x,x) \sim  \bold0$ for every $x \in X$. Moreover, for every $y \in X$ we have either:
$$(i)\hspace{0.15cm} \text{There exists a unique} \hspace{0.15cm} T_1(y) \in X  \hspace{0.15cm} \text{such that} \hspace{0.15cm} (T_1(y),y) \sim  \bold1$$ 
or
$$(ii) \hspace{0.15cm} \bold1 \succ  (x,y) \hspace{0.08cm} \text{ for all } x \in X\hspace{6cm} $$
Indeed, if $(ii)$ fails, there exists $x' \in X$ such that $(x',y) \succcurlyeq  \bold1$. 
Since $(x',y) \succcurlyeq  \bold1 \succcurlyeq  \bold0 \sim  (y,y)$, by Lemma \ref{1} there exists an element $T_1(y) \in X$ such that $(T_1(y),y) \sim \bold1$. By $\Aone$ and antisymmetry of $\succsim$ , $(T_1(y),y) \sim (y',y)$ implies $T_1(y) = y'$, so $  T_1(y)$ is unique. 

In addition, note that $y \prec T_1(y)$. Indeed, $(y,y) \sim \bold0 \prec \bold1 \sim (T_1(y),y)$, and so $\Aone$ implies $y \prec T_1(y)$. In a similar way as before, for every $y \in X$ we have either:
$$(i.bis)\hspace{0.15cm} \text{There exists a unique} \hspace{0.15cm} T_{-1}(y) \in X  \hspace{0.15cm} \text{such that} \hspace{0.15cm} (T_{-1}(y),y) \sim  \bold{-1}$$ 
or
$$(ii.bis) \hspace{0.15cm} \bold{-1} \prec  (x,y) \hspace{0.08cm} \text{ for all } x \in X\hspace{6.35cm} $$
Indeed, if $(ii.bis)$ fails, there exists $x' \in X$ such that $(x',y) \preccurlyeq  \bold{-1}$. 
Since $(x',y) \preccurlyeq  \bold{-1} \preccurlyeq  \bold0 \sim  (y,y)$, by Lemma \ref{1} there exists an element $T_{-1}(y) \in X$ such that $(T_1(y),y) \sim \bold{-1}$. By $\Aone$ and antisymmetry of $\succsim$ , $(T_{-1}(y),y) \sim (y',y)$ implies $T_{-1}(y) = y'$, so $  T_{-1}(y)$ is unique. 

In addition, note that $T_{-1}(y) \prec y$. Indeed, $ (T_{-1}(y),y) \sim \bold{-1} \prec \bold0 \sim (y,y)$, and so $\Aone$ implies $T_{-1}(y) \prec y$.

Now define $a_2 := T_1(a_1)$ if $(i)$ holds for $y=a_1$, $\ie$ if there exists a unique $T_1(a_1) \in X$ such that $(T_1(a_1),a_1) \sim  \bold1$. Similarly, set $a_3:= T_1(a_2)$ if $(i)$ holds for $y=a_2$, and continue in this way till (if ever) occurs $y=a_{n}$ for which $(ii)$ holds, $\ie$ $\bold1 \succ  (x,a_{n})$ for every $x \in X$. Analogously, we define $a_{-1}:= T_{-1}(a_0)$ if $(i.bis)$ holds for $y=a_0$, set $a_{-2}:= T_{-1}(a_{-1})$ if $(i.bis)$ holds for $y=a_{-1}$, and continue in this way till (if ever) occurs $y=a_{-n}$ for which $(ii.bis)$ holds.

Now define $\mathcal{A} := \{\dots, a_{-2}, a_{-1}, a_0, a_1, a_2, \dots \}$, with $$ \dots \prec a_{-2} \prec a_{-1} \prec a_0 \prec a_1 \prec a_2 \prec \dots$$

\noindent
The set $\mathcal{A}$ can be finite or infinite in either direction. If we consider now a sequence that from an index set $P_{a} \subseteq \mathbb{Z}$ maps to $\mathcal{A}$, we define the following function $a : P_{a} \subseteq \mathbb{Z} \rightarrow \mathcal{A}$. 

Now we start to extend $u$ to $\mathcal{A}$. Define the following: $$u(a_{p}) = p \hspace{0.4cm} \text{for every } p \in P_{a}.$$ Clearly, we have (\ref{eq1.10}), $\ie$ $x \succsim y$ if and only if $u(x) \ge u(y)$ for every $x,y$ that are images of the sequence $a$, so (\ref{eq1.10}) holds on $\mathcal{A}$.

Now we show that (\ref{eq1.11}) holds whenever $x,y,z,w \in \mathcal{A} \subset X$, say $x=a_p, y=a_q, z=a_{p-d}$ where $p,q,p-d \in P_a$. Without loss of generality, assume $d \ge 0$. We first prove the ``equality" case of (\ref{eq1.11}), that is \begin{equation}\label{eq1.22}
(x,y) \sim  (z,w) \iff u(x) - u(y) = u(z) - u(w)
\end{equation}

\noindent
By construction we have $$(a_{p},a_{p-1}) \sim  \bold1 \sim  (a_{q},a_{q-1})$$ so, by transitivity and $\Atwo$, we have: $$(a_{p},a_{q}) \sim  (a_{p-1},a_{q-1})$$ 

\noindent
Iterating this procedure finitely many times we reach: \begin{equation}\label{eq1.23}
(x,y) = (a_{p},a_{q}) \sim  (a_{p-d},a_{q-d}) = (z,a_{q-d})
\end{equation}

By transitivity, $(z, a_{q-d}) \sim (z,w)$ and so, by $\Aone$ $a_{q-d}=w$, so that $u(a_{q-d}) = u(w)$. By definition of $u$ we can write $$u(x) - u(y) = u(a_{p}) - u(a_{q}) = p-q = u(a_{p-d}) - u(a_{q-d}) = u(z) - u(w)$$thus proving (\ref{eq1.22}). Next we prove \begin{equation}\label{eq1.24}
(x,y) \succ  (z,w) \iff u(x) - u(y) > u(z) - u(w)
\end{equation}

\noindent
By transitivity, $(z, a_{q-d}) \succ (z,w)$ and so, by $\Aoneprime$, $w \succ a_{q-d}$, so that $u(w) > u(a_{q-d})$. By definition of $u$, from (\ref{eq1.23}) we can write $$u(x) - u(y) = u(a_{p}) - u(a_{q}) = p-q = u(a_{p-d}) - u(a_{q-d}) > u(z) - u(w)$$ thus proving (\ref{eq1.24}).

Summing up, both (\ref{eq1.10}) and (\ref{eq1.11}) hold on the terms of the set $\mathcal{A}$. Using Lemma \ref{2}, now we want to extend $u$ to the points of $X$ that lie between terms of the set $\mathcal{A}$.
Set $b_{0}:=a_0$ and since $a_1 \succ a_0$, by Lemma \ref{2} there exists $b_1 \in X$, with $a_1 \succ b_1 \succ a_0$, such that $$(a_1,b_1) \sim  (b_1,a_0)$$ Now build the set $\mathcal{B} := \{\dots, b_{-2}, b_{-1}, b_0, b_1, b_2, \dots \}$, with $$ \dots \prec b_{-2} \prec b_{-1} \prec b_0 \prec b_1 \prec b_2 \prec \dots$$ based on $b_0, b_1$, in the same way we constructed $\mathcal{A}$ from $a_0, a_1$. Also here, we can define a sequence that from an index set $P_{b} \subseteq \mathbb{Z}$ maps to $\mathcal{B}$, that is, we define the following function $b : P_{b} \subseteq \mathbb{Z} \rightarrow \mathcal{B}$.

By construction we have $$(b_2,b_1) \sim  (b_1,b_0)$$ Together with $(a_1,b_1) \sim  (b_1,a_0)$, by transitivity we have $(b_2,b_1) \sim  (a_1,b_1)$. By $\Aone$, $b_2 = a_1$. Analogously, one can verify that 
\begin{equation}\label{eq1.25}
b_{2p} = a_{p} \text{  for every  } p\in P_a \end{equation}

So, the terms of the set $\mathcal{B}$ lie between the terms of the set $\mathcal{A}$, $\ie$ the set $\mathcal{B}$ refines $\mathcal{A}$ and we can write 
\begin{equation}\label{eq1.26}
\mathcal{A} \subseteq \mathcal{B}
\end{equation}

Denote now $c_0:=b_{0}=a_0$ and we let $c_1 \in X$ be that element provided by Lemma \ref{2} such that $(b_1,c_1) \sim  (c_1,b_0)$. In the same way we constructed $\mathcal{B}$ from $\mathcal{A}$, we can construct, from $\mathcal{B}$, a third set $\mathcal{C}:= \{\dots, c_{-2}, c_{-1}, c_0, c_1, c_2, \dots \}$, with $$ \dots \prec c_{-2} \prec c_{-1} \prec c_0 \prec c_1 \prec c_2 \prec \dots$$ based on $c_0, c_1$. We can see that $$c_{2p} = b_{p} \text{  for every  } p\in P_c$$ where $P_c \subseteq \mathbb{Z}$ is the collection of indexes of the sequence $c : P_{c} \subseteq \mathbb{Z} \rightarrow \mathcal{C}$.

The set $\mathcal{C}$ refines $\mathcal{B}$
\begin{equation}\label{eq1.27}
\mathcal{B} \subseteq \mathcal{C}
\end{equation}

We keep iterating this process, constructing sets that refine one another and, for ease of notation, we denote them in the following way: 

$$\mathcal{A}_0 := \mathcal{A} \hspace{1cm}\text{and}  \hspace{1cm} a^{0}_p:= a_{p} \in  \mathcal{A}_0$$
$$\mathcal{A}_1 :=\mathcal{B}  \hspace{1cm}\text{and}  \hspace{1cm} a^{1}_p:= b_{p} \in  \mathcal{A}_1$$
$$\mathcal{A}_2 :=\mathcal{C}  \hspace{1cm}\text{and}  \hspace{1cm} a^{2}_p:= c_{p} \in  \mathcal{A}_2$$
$$\cdots$$

\noindent
These sets generalize the inclusions (\ref{eq1.26}) and (\ref{eq1.27}) as follows:
\begin{equation}\label{eq1.28}
\mathcal{A}_0 \subseteq \mathcal{A}_1 \subseteq \mathcal{A}_2 \subseteq \dots \subseteq \mathcal{A}_n \subseteq \dots
\end{equation}

\noindent
So, in general, $a^{n}_p$ for $p \neq 1$ is obtained from the construction of $(i)$ and $(ii)$, applied to the points $a_0, a^{n}_1$. The term $a^{n}_1$, for $n>0$, is the ``midpoint" between $a^{n-1}_1$ and $a_0$, that exists by Lemma \ref{2}. By iterating the construction of (\ref{eq1.25}), we have that $$\frac{p}{2^{n}} = \frac{q}{2^{m}} \Longrightarrow a^{n}_p = a^{m}_q $$
\noindent
In the spirit of (\ref{eq1.28}), we extend $u$ to all points in $\mathcal{A}_{\infty} := \bigcup_{n=1}^{\infty}\mathcal{A}_{n}$ by: $$u(a^{n}_p)= \frac{p}{2^{n}} \hspace{0.5cm} \text{ for all }  a^{n}_p \in \mathcal{A}_{n}$$

Relations (\ref{eq1.10}) and (\ref{eq1.11}) hold in this extended domain: given $x,y,z,w \in \bigcup_{n=1}^{\infty}\mathcal{A}_{n}$, just take $n$ large enough so that they become, up to indifference, terms of the set $\mathcal{A}_{n}$ and proceed in the same exact way as we did for the set $\mathcal{A}_0$.

To complete the construction of $u$ we only remain to show $\mathcal{A}_{\infty}$ is dense in $X$, that is $\overline{\mathcal{A}_{\infty}} = X$. We first show that none of the sets $\mathcal{A}_{n}$ has, for its sequences of points $a^n$, a point of accumulation in $X$. Indeed, fix $n$ and suppose by contradiction that $a^n_{p_{k}}$ converges monotonically to $a^{\star} \in X$, where, without loss of generality, we assume $a^n_{p_{k}} \uparrow a^{\star}$ with $a^{\star} \in X$, $\ie$ $(p_{k})$ is an increasing sequence of integers. Denote $\bold1_{n} := (a^{n}_1,a_0)$ and we have, for every $k \in \mathbb{N}$, $$(a^{n}_{1 + p_{k}},a^{n}_{p_{k}}) \succcurlyeq  \bold1_{n} \succ  \bold0  $$
By Lemma \ref{producttopology}, we have $(a^{\star},a^{\star}) \succcurlyeq  \bold1_{n}$. So, by transitivity, we reach $(a^{\star},a^{\star})   \succ  \bold0$, a contradiction. We conclude that, fixed $n$, none of the sequences $a^{n}$ with values in $\mathcal{A}_n$ has a limit point in $X$.

To prove $\overline{\mathcal{A}_{\infty}} = X$, the implication $\overline{\mathcal{A}_{\infty}} \subseteq X$ is trivial by construction. Now we want to show $\overline{\mathcal{A}_{\infty}} \supseteq X$, that is all the elements of $X$ belong to the closure of $\mathcal{A}_{\infty}$ as well. Fix $x \in X$ such that, without loss of generality, $x \succsim  a_0$. For $n \ge 1$, define $y_{n}:=\sup\{y \in \mathcal{A}_{n} : x \succsim  y\}$. Note that $a_0 \in \{y \in \mathcal{A}_{n} : x \succsim  y\}$, so this set is nonempty and we can write $x \succsim  y_{n} \succsim  a_{0}$. By Lemma \ref{wlc}, $y_{n} \in X$. Note further that, as shown before, $\mathcal{A}_{n}$ cannot have accumulation points in $X$ so, as long as $y_{n} \in X$, it follows $y_{n}$ cannot be an accumulation point of $\mathcal{A}_{n}$. So, $y_n$ must belong to $\mathcal{A}_{n}$ and we denote $y_{n}:= a^n_{p_{n}}$. As a result, we have: 
\begin{equation}\label{eq1.29}
a^n_{p_{n}-k} \precsim x \prec a^n_{p_{n}+k} \text{  for every  } k >0
\end{equation}
We also have that
\begin{equation}\label{eq1.30}
\bold1_{n} \succ  (x,y_{n}) \end{equation}
Indeed, if (\ref{eq1.30}) were not true, then $(x,a^n_{p_{n}}) \wpp \bold1_n$. We consider two cases: $a^n_{1+p_{n}} \succ x$ or $a^n_{1+p_{n}}\precsim x$. If $a^n_{1+p_{n}} \succ x$, then, thanks to $\Aone$, we reach the following contradiction: \begin{equation}
\bold1_{n} \sim (a^n_{1+p_{n}}, a^n_{p_{n}}) \succ (x, a^n_{p_{n}}) \wpp \bold1_n
\end{equation}
So $a^n_{1+p_{n}}\precsim x$, but this contradicts (\ref{eq1.29}), that is, it contradicts $y_{n}$ to be the supremum. Thus, (\ref{eq1.30}) holds. In particular, by $\Aone$ and $\Atwo$, we can write $(x, y_n) \wpp (y_n,y_n) \sim \bold0$, leading to
\begin{equation} \label{eq1.31}
\bold1_{n} \succ  (x,y_{n}) \wpp \bold0
\end{equation}

Now, when $n \rightarrow \infty$, as the sets $\mathcal{A}_{n+1} \supseteq \mathcal{A}_{n} \supseteq \mathcal{A}_{n-1} \dots$ are nested one into the other by (\ref{eq1.28}), we can write, for every $n \ge 1$, $y_n \precsim y_{n+1} \precsim x$. Thus, the points $y_{n}$ form a non-decreasing sequence that is bounded from above by $x$. Call $y^{\star}$ the limit of this sequence, that is well-defined by Lemma \ref{linearcontinuum}. Since $a_0 \precsim y^{\star} \precsim x$, by Lemma \ref{wlc} it follows that $y^{\star} \in X$. In particular, by Lemma \ref{metric} we have $y^{\star} \in \overline{\mathcal{A}_{\infty}}$, because, for every fixed $n \ge 1$, $y_n$ is a term of the sets $\mathcal{A}_{n}$, and so $(y_n) \in \mathcal{A}_{\infty}^{\mathbb{N}}$.

As to the $\bold1_{n}$ terms, for $n$ fixed, we see that $$\bold1_n \sim  (a^n_2, a^n_1) \sim (a^{n-1}_1, a^n_1)$$ We also have that, for every $n \ge 1$, $a_0 \precsim a^{n+1}_1 \precsim a^n_1$.

Thus, the points $a^{n}_{1}$ form, for $n \rightarrow \infty$, a non-increasing sequence that is bounded from below by $a_0$. Call $a^{\star}$ the limit of this sequence, that is well-defined by Lemma \ref{linearcontinuum}. Since $a_0 \precsim   a^{\star} \precsim  a_1$, by Lemma \ref{wlc} we have $a^{\star} \in X$.

Consider now $(a^{n-1}_1, a^{n}_1)$ and $(x, y_n)$. By Lemma \ref{producttopology} and from (\ref{eq1.31}) it follows that $$(a^{\star},a^{\star}) \wpp (x,y^{\star}) \wpp \bold0$$
Since, by $\Atwo$, $(a^{\star},a^{\star}) \sim \bold0$, by transitivity $ (x,y^{\star}) \sim \bold0$, so that $x \sim y^{\star}$, $\ie$ $x = y^{\star}$ as $\succsim$ is antisymmetric.

Since $x$ was arbitrarily chosen in $X$ and $y^{\star} \in \overline{\mathcal{A}_{\infty}}$, we can conclude $x \in \overline{\mathcal{A}_{\infty}}$, so that $\overline{\mathcal{A}_{\infty}} = X$. Therefore, we can extend $u$ by continuity to the whole set $X$ by setting $$u(x) = \lim_{n \rightarrow \infty}u(x_n)$$ if $(x_n) \in \mathcal{A}_{\infty}^{\mathbb{N}}$ converges monotonically to $x$. Note that $u: X \rightarrow \R$ is well-defined. Indeed, to prove it is well-posed we show that if $x_n$ and $y_n$ are two sequences that converge to $x$, then $\lim_{n \rightarrow \infty}u(x_n)=\lim_{n \rightarrow \infty}u(y_n)$. This follows easily by continuity of $u$.\footnote{Recall that in every topological space $X$ continuity implies sequential continuity. The converse holds if $X$ is first-countable.} In light of Lemma \ref{metric}, it is easy to see that $u$ satisfies (\ref{eq1.10}) and (\ref{eq1.11}).

As to uniqueness, observe that any other $\overline{u}$ that satisfies (\ref{eq1.10}) and (\ref{eq1.11}) can be normalized so that $\overline{u}(a_0) = 0$ and $\overline{u}(a_1)=1$. So, $\overline{u}$ must agree on $u$ at each step of the constructive procedure for $u$ just seen. Indeed, for a given $\overline{u}: X \rightarrow \R$, define the following positive affine transformation $f: \text{Im}(\overline{u}) \rightarrow \R$ such that $$f(x):= \frac{x - \overline{u}(a_0)}{\overline{u}(a_1) - \overline{u}(a_0)}$$ It is immediate to see that, for the equivalent utility function $\widehat{u} := f \circ \overline{u}$, we have $\widehat{u}(a_0) = 0$ and $\widehat{u}(a_1) = 1$.

Summing up, we proved Theorem \ref{thm1} if $\succsim$ is antisymmetric. Now we drop this assumption. Let $\equivclass{X}$ be the quotient space with respect to the equivalence relation $\sim$. The set $\{x \in X : x \sim y\}$ is a closed set in $X$ by Lemma \ref{t1}, so $(\equivclass{X}, \thicktilde{\succsim})$ is a totally ordered connected and separable subset of a topological space, where $\thicktilde{\succsim}$ is the total order induced by the weak order $\succsim$.\footnote{That is, $\thicktilde{\succsim}:= \text{}\equivclass{\succsim}$ $\subseteq$ $\equivclass{X}$ $\times$ $\equivclass{X}$.} Therefore, the orders $\succsim$ and $\wpp$ induce orders $\thicktilde{\succsim}$ and $\thicktilde{\wpp}$ on the quotient set $\equivclass{X}$, by setting, for all $[x],[y]  \in \equivclass{X}$ $$[x] \text{  } \thicktilde{\succsim} \text{  } [y] \iff x \succsim y$$ and, for all $[x],[y],[z],[w] \in \equivclass{X}$ $$([x], [y]) \text{  } \thicktilde{\wpp} \text{  } ([z],[w]) \iff (x,y) \wpp (z,w)$$ It is routine to show that the orders $\thicktilde{\succsim}$ over $\equivclass{X}$ and $\thicktilde{\wpp}$ over $\equivclass{X} \times \equivclass{X}$ inherit the same properties of $\succsim$ and $\wpp$ used in the theorem. So, by what has been proved so far, there exists $\thicktilde{u}:  \equivclass{X} \rightarrow \R$ that satisfies (\ref{eq1.10}) and (\ref{eq1.11}) for $(\thicktilde{\succsim}, \thicktilde{\wpp})$. Let $\pi: X \rightarrow \equivclass{X}$ be the quotient map. Then, the function $u: X \rightarrow \mathbb{R}$ defined as $u= \thicktilde{u} \circ \pi$ is a well-defined measurable utility function, $\ie$ it is easily seen to satisfy (\ref{eq1.10}) and (\ref{eq1.11}) for $(\succsim, \wpp)$.

To conclude, we show that $u$ satisfies (\ref{eq1.10}) and (\ref{eq1.11}). If $x \sim y$ then $[x]=[y]$ and, by the theorem we have just proved, $\thicktilde{u}([x])=\thicktilde{u}([y])$, which is $(\thicktilde{u} \circ \pi )(x)=(\thicktilde{u} \circ \pi )(y)$, and so $u(x)=u(y)$. If $x \succ y$, then $[x] \succ  [y]$, which implies $\thicktilde{u}([x])>\thicktilde{u}([y])$, which is $(\thicktilde{u} \circ \pi )(x)>(\thicktilde{u} \circ \pi )(y)$, and so $u(x)>u(y)$.

Conversely, assume $u(x) \ge u(y)$ and suppose by contradiction $x \nsucceq y$ that, by completeness, is $y \succ x$. If $u(x) = u(y)$ then $\thicktilde{u}([x])=\thicktilde{u}([y]) \iff [x]=[y] \iff x \sim y$, a contradiction. If $u(x) > u(y)$ then $\thicktilde{u}([x])>\thicktilde{u}([y]) \iff [x]>[y] \iff x \succ y$, a contradiction. Hence, (\ref{eq1.10}) holds for $u$.

By definition, we have that $([x],[y]) \succcurlyeq  ([z],[w]) \iff (x,y) \succcurlyeq (z,w)$, for all [x],[y],[z],[w] $\in \equivclass{X}$. So, we can write $(x,y) \succcurlyeq (z,w) \iff ([x],[y]) \succcurlyeq  ([z],[w]) \iff \thicktilde{u}([x]) - \thicktilde{u}([y]) \ge \thicktilde{u}([z]) - \thicktilde{u}([w]) \iff u(x)- u(y) \ge u(z) - u(w).$ Hence, also (\ref{eq1.11}) holds for $u$.

This completes the proof of Theorem \ref{thm1}.
\end{proof}

\noindent
Graphically, we can build the following diagram to represent our construction.

\begin{center} \Huge
\begin{tikzcd}[column sep=2cm, row sep=2cm,font= \huge , label= \huge ]
  X \arrow[r, twoheadrightarrow, "\pi"] \arrow[dr,"u"', red]
& \equivclass{X} \arrow[d, "\thicktilde{u}"]\\
& \mathbb{R}
\end{tikzcd}
\end{center}

\addcontentsline{toc}{section}{References}

\end{document}